\newtheorem{theorem}{Theorem}
\newtheorem{proposition}[theorem]{Proposition}
\newtheorem{lemma}[theorem]{Lemma}
\newtheorem{corollary}[theorem]{Corollary}
\newtheorem{example}[theorem]{Example}
\tikzstyle{ledger} = [rectangle, draw, 
\tikzstyle{wideledger} = [rectangle, draw, 
\tikzstyle{line} = [draw, -latex']
\tikzstyle{peer} = [circle, draw, fill=white]
\begin{document}
\title{Analysis of the XRP Ledger Consensus Protocol}
\author{Brad Chase \and Ethan MacBrough}
\date{Ripple Research \\
	\texttt{\{bchase,emacbrough\}@ripple.com} \\ \vspace*{0.5em}
	\today}
\maketitle
\begin{abstract}
    The XRP Ledger Consensus Protocol is a previously developed consensus protocol powering the XRP Ledger. It is a low-latency Byzantine agreement protocol, capable of reaching consensus without full agreement on which nodes are members of the network. We present a detailed explanation of the algorithm and derive conditions for its safety and liveness. 
\end{abstract}

\section{Introduction}  
The XRP Ledger is a distributed payment system enabling users to transfer value seamlessly around the world. Operating within a distributed peer-to-peer network, the XRP Ledger faces the same challenges as other digital currencies in preventing double-spending of funds and ensuring network-wide consensus on the state of user accounts and balances. First proposed and then implemented by Schwartz et al. \citep{schwartz2014ripple}, the algorithm underlying XRP solves these problems using a Byzantine fault tolerant agreement protocol over collectively trusted subnetworks, hereby referred to as the XRP Ledger Consensus Protocol, or XRP LCP for short.

Abstractly, the XRP Ledger network is a replicated state machine \citep{Schneider1993RMU}. The replicated state is the ledger maintained by each node in the network and state transitions correspond to transactions submitted by clients of the network. Once nodes agree on sets of transactions to apply to the state, a transaction processing protocol specifies deterministic rules for ordering transactions within each set and how to apply transactions to generate the new ledger state. Thus, the role of XRP LCP is only to make the network reach agreement on sets of transactions, not on the content or outcome of those transactions. As long as nodes agree on a transaction set, the transaction processing protocol guarantees that every node generates a consistent ledger. As a Byzantine fault tolerant protocol, XRP LCP must operate even in the presence of faulty or malicious participants.

Byzantine fault tolerant consensus protocols have a rich history, but most require preexisting agreement on the protocol participants \citep{Pease1980RAP,Castro1999}. The distinguishing characteristic of XRP LCP is that it guarantees consistency with only partial agreement on who participates, allowing a decentralized open network. Compared to other decentralized consensus algorithms like proof-of-work \citep{nakamoto2012bitcoin} or proof-of-stake \citep{2017arXiv171009437B}, XRP LCP has since its inception provided lower transaction latency and higher throughput for its users. However, without uniform agreement on the network participants, users still need a way to determine whether their choice of network peers will lead to a consistent network state. In this setting, each user individually defines a \textbf{unique node list} or \textbf{UNL}, which is the set of nodes whose messages it will listen to when making decisions about the network state. It is the intersection of any pair of correct nodes' UNLs that determines network safety. As described in the original whitepaper \citep{schwartz2014ripple}, the minimum overlap requirement was originally believed to be roughly $20\%$ of the UNL. An independent analysis later suggested the correct bound was instead roughly $>40\%$ \citep{Armknecht2015}. 

Given this confusion, our goal in this work is to give a clear and detailed explanation of XRP LCP and derive the necessary conditions on UNL overlap for consistency and liveness. We will not discuss the transactional semantics of XRP's ledger or XRP's benefits as a digital currency, but instead view the algorithm as a general consensus protocol. We re-evaluate the two prior overlap results and provide a single corrected bound which is partway between the bounds of \citep{schwartz2014ripple} and \citep{Armknecht2015}. We also show that under a more general fault model which was not considered in the original whitepaper but is canonically used in the research literature, the minimum overlap is actually roughly $>90\%$ of the UNL. Finally, we show that during the present stages of diversifying trusted network operators \citep{decentralizeStrategy}, the XRP network is both safe and cannot become ``stuck" making no forward progress.

This research provides a definitive result about the safety of XRP Ledger in its current state. However, to encourage greater flexibility in choosing UNLs in the future, we would prefer an algorithm that gets closer to the original expected overlap bounds. In a sibling paper \citep{Ethan}, we present a novel alternative consensus algorithm called \textbf{Cobalt} that lowers the overlap bound to only $>60\%$ in the general fault model, cannot get stuck in \textit{any} network that satisfies the overlap bound, and has several other properties that make it suitable for eventually replacing XRP LCP. This paper thus serves primarily to show that the XRP Ledger is safe in the interim while transitioning to Cobalt, and the relatively strict requirements on UNL configurations under XRP LCP should be viewed in light of this planned transition.

Section~\ref{definitions} defines the network model and defines the consensus problem. Section~\ref{protocol} is a detailed description of the Ripple consensus algorithm. In section~\ref{analysis}, we prove the network conditions needed to guarantee correctness of the algorithm. Finally, section~\ref{conclusion} concludes with a discussion of the results and directions for improvement.

\section{Network Model and Problem Definition} \label{definitions}

Let $\mathcal{P}_i$ be a node in the network with unique identifier $i$, such as a cryptographic public key. Each node $\mathcal{P}_i$ is free to choose a \textbf{unique node list} $\mathsf{UNL}_i$, which is the set of nodes (possibly including itself) whose messages $\mathcal{P}_i$ will listen to as part of the XRP LCP. The UNL represents a subset of the network which, when taken collectively, is “trusted” by $\mathcal{P}_i$ to not collude in an attempt to defraud the network (see \citep{schwartz2014ripple} for the motivation of this name). The UNLs give structure to the network, with a node that is present in more UNLs implicitly having more influence. An individual node has complete discretion in the choice of their UNL, although we show in section~\ref{analysis} that minimum overlap with other UNLs is necessary for consistency and liveness with other honest nodes. We do not assume that trust is symmetric, so for instance there may be a node $\mathcal{P}_j\in\mathsf{UNL}_i$ such that $\mathcal{P}_i\notin\mathsf{UNL}_j$. Figure \ref{fig:trustgraph} shows an example trust network.

A node that is not crashed and behaves exactly according to the XRP LCP specification is said to be \textbf{honest} or \textbf{correct}; we use the two terms interchangeably. Any node that does not behave according to protocol is said to be \textbf{Byzantine}. Byzantine behavior can include not responding to messages, sending incorrect messages, and even sending different messages to different parties. In section~\ref{analysis}, we initially consider a restriction on the adversary called \textbf{Byzantine accountability}, which states that all nodes -- even Byzantine ones -- cannot send different messages to different nodes. This was part of the original whitepaper \citep{schwartz2014ripple}, which assumed such behavior in a peer-to-peer network could be identified and corrected by honest nodes. However, due to asynchrony and the possibility of honest nodes being temporarily partitioned, this assumption is in practice tenuous at best. Thus the bulk of our results do not depend on this assumption and we will clearly state when it is assumed.

For any node $\mathcal{P}_i$, we denote $n_i=|\mathsf{UNL}_i|$ and define the \textbf{quorum}, denoted $q_i$, a parameter which roughly specifies the minimum number of agreeing nodes in $\mathsf{UNL}_i$ that $\mathcal{P}_i$ needs to hear from to commit to a decision. Each node sets $q_i$ to be $80\%$ of its UNL size, or, more exactly, $q_i=\lceil 0.8 n_i\rceil$.  We assume at most $t_i\leqslant n_i-q_i$ nodes in $\mathsf{UNL}_i$ may be Byzantine faulty.

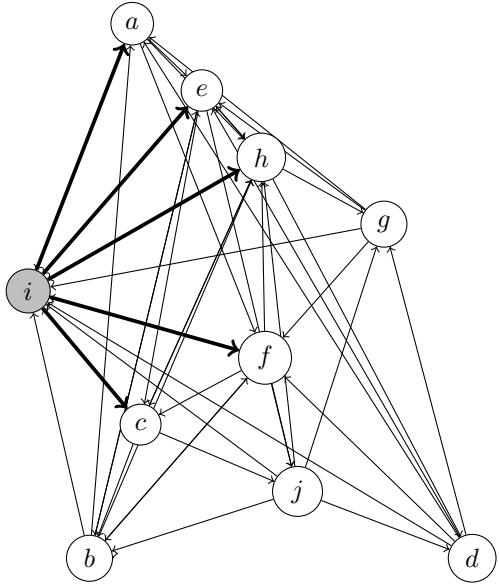
\begin{figure}
    \centering
    \begin{tikzpicture}[scale=0.35]
    \node[peer] (a) at (254.0bp,594.0bp) [] {$a$};
    \node[peer] (c) at (263.0bp,162.0bp) [] {$c$};
    \node[peer] (b) at (208.0bp,18.0bp) [] {$b$};
    \node[peer] (e) at (329.0bp,522.0bp) [] {$e$};
    \node[peer] (d) at (620.0bp,18.0bp) [] {$d$};
    \node[peer] (g) at (525.0bp,378.0bp) [] {$g$};
    \node[peer] (f) at (397.0bp,234.0bp) [] {$f$};
    \node[peer, fill=gray!50] (i) at (142.0bp,306.0bp) [] {$i$};
    \node[peer] (h) at (393.0bp,450.0bp) [] {$h$};
    \node[peer] (j) at (432.0bp,90.0bp) [] {$j$};
    \begin{pgfonlayer}{background}
    \draw [->] (g) -- (e);
    \draw [->] (d) -- (e);
    \draw [->] (h) -- (i);
    \draw [line width=0.5mm,->] (i) -- (a);
    \draw [->] (g) -- (a);
    \draw [->] (j) -- (b);
    \draw [line width=0.5mm,->] (i) -- (h);
    \draw [->] (b) -- (h);
    \draw [->] (c) -- (b);
    \draw [->] (h) -- (e);
    \draw [->] (b) -- (f);
    \draw [->] (e) -- (c);
    \draw [->] (a) -- (i);
    \draw [->] (e) -- (j);
    \draw [->] (f) -- (h);
    \draw [->] (g) -- (f);
    \draw [->] (g) -- (i);
    \draw [->] (j) -- (d);
    \draw [->] (d) -- (f);
    \draw [->] (c) -- (j);
    \draw [->] (b) -- (i);
    \draw [->] (d) -- (g);
    \draw [->] (d) -- (i);
    \draw [->] (e) -- (b);
    \draw [->] (a) -- (h);
    \draw [->] (f) -- (i);
    \draw [->] (f) -- (b);
    \draw [->] (j) -- (h);
    \draw [->] (e) -- (i);
    \draw [->] (b) -- (a);
    \draw [->] (c) -- (i);
    \draw [->] (a) -- (f);
    \draw [line width=0.5mm,->] (i) -- (f);
    \draw [->] (f) -- (j);
    \draw [->] (h) -- (g);
    \draw [->] (f) -- (c);
    \draw [->] (e) -- (h);
    \draw [->] (b) -- (b);
    \draw [->] (c) -- (h);
    \draw [->] (h) -- (c);
    \draw [->] (b) -- (e);
    \draw [->] (j) -- (i);
    \draw [line width=0.5mm,->] (i) -- (c);
    \draw [->] (d) -- (a);
    \draw [->] (a) -- (e);
    \draw [->] (h) -- (d);
    \draw [->] (j) -- (g);
    \draw [line width=0.5mm,->] (i) -- (e);
    \end{pgfonlayer}
    \end{tikzpicture}
    \caption{Example trust graph. The highlighted edges represent the UNL of node $\mathcal{P}_i$.}
    \label{fig:trustgraph}
\end{figure}
Let $L$ represent a ledger, which is the shared state of the system and includes account settings, balances, order books, etc. Two ledgers $L,L'$ are the same if they represent the same ordered history of transactions starting from the unique genesis ledger. Each ledger also has sequence number $\mathsf{seq}(L)$ that is one greater than its parent ledger's sequence number. The genesis ledger has $\mathsf{seq}(L)=1$. A ledger $L$ is created by applying a sequence of transactions $T=[x_0,x_1,\ldots]$ to its parent $\mathsf{parent}(L)$ according to the protocol rules. Two ledgers may have the same parent ledger and sequence number, but differ because they applied different transactions. Note though that the protocol specifies that every set of transactions has a deterministic ordering, so it is not possible for two correct nodes to apply the same transactions but in a different order.

The nodes communicate over a peer-to-peer network which has no prescribed relation with the UNL structure. We simply assume that for every node $\mathcal{P}_i$ and every node $\mathcal{P}_j \in \mathsf{UNL}_i$, there is a reliable authenticated channel for $\mathcal{P}_i$ to receive messages from $\mathcal{P}_j$. To implement such an authenticated channel, all messages are cryptographically signed and verified by receivers. Each node uses a single communication primitive \textbf{broadcast}, which when called from node $\mathcal{P}_j$ sends the same message to all nodes $\mathcal{P}_i$ for which $\mathcal{P}_j \in \mathsf{UNL}_i$. In the algorithms presented in appendix~\ref{appendix}, we use a corresponding \textbf{receive} primitive which is called asynchronously upon the receipt of a broadcast message.

The outcome of XRP LCP is for a node to \textbf{fully validate} ledgers. A fully validated ledger is irrevocable and authoritative, and reflects transactions submitted by network clients and accepted by the consensus algorithm. Fully validating a ledger also fully validates all of its ancestors. In this context, a \textbf{fork} is a situation in which two honest nodes fully validate contradictory ledgers, i.e., different ledgers with the same sequence number. The network is said to be \textbf{fork-safe} if it can never fork with a tolerated configuration of Byzantine nodes.

Although XRP LCP is typically defined in terms of the fully validated chain of ledgers, since each ledger in the chain represents a deterministically-ordered sequence of transactions it also can be considered as an \textbf{atomic broadcast} protocol with batching of transactions for efficiency. Formally, an atomic broadcast protocol is an algorithm in which a set of \textbf{clients}, arbitrarily many of which may be Byzantine faulty, can broadcast \textbf{transactions}, and each node can \textbf{accept} some of those transactions according to the following properties:
\begin{enumerate}
    \item ABC-Agreement: If a correct node accepts a transaction into a ledger, then eventually all correct nodes accept the transaction into a ledger.
    \item ABC-Linearizability: If a correct node accepts transaction $x$ before transaction $x'$, then all correct nodes accept transaction $x$ before $x'$.
    \item ABC-Censorship-Resilience: If a correct client broadcasts a valid transaction $x$ to all correct nodes, $x$ will eventually be accepted by all nodes.
\end{enumerate}

An atomic broadcast algorithm is in particular a variant of consensus \citep{Chandra1996UFD}. Note that Censorship Resilience is the formalized definition of forward progress. In practice, the peer-to-peer network weakens the requirement that clients submit their transaction to all correct nodes, since correct nodes will echo a submitted transaction to each other, flooding the network until every node receives it.

In order to evaluate the correctness of XRP LCP, we model the peer-to-peer network as if it were controlled by a \textbf{network adversary} that can behave arbitrarily. The adversary is controls the delivery order of all messages, as well as at most $t_i$ nodes in $\mathsf{UNL}_i$ for any correct node $\mathcal{P}_i$. We assume though that the adversary is computationally bounded; in particular, it is unable to break generally accepted cryptographic protocols. The identities of Byzantine nodes are unknown in advance by honest nodes in the network.

Since atomic broadcast is a variant of consensus, by the FLP result \citep{Fischer1985IDC} we cannot guarantee forward progress in the presence of arbitrary asynchrony and faulty nodes. Instead, we assume a form of ``weak asynchrony": safety should hold under arbitrary asynchrony, but censorship resilience is only guaranteed to hold under the assumption that the network is eventually \textbf{civil}, meaning that messages are delivered within some protocol-specified maximum delay bound and no nodes are faulty.

In order to help enforce the bounded delay in the XRP ledger implementation, several heuristics are used to identify lagging nodes, prevent excessive flooding of messages and route traffic through the network. Protocol parameters also define maximum delays on different trusted messages in an attempt to aid liveness. Although we ignore these details to simplify the presentation below, we stress that they are important practical considerations in the actual implementation and control the real world performance of the algorithm. The fact that XRP LCP is only weakly asynchronous and its performance depends on these parameters is a limitation of the algorithm. Cobalt, the proposed alternative algorithm to XRP LCP, does not have these limitations. It uses cryptographic randomness to evade the FLP result and guarantees forward progress even with the maximal number of tolerated faulty nodes and unbounded asynchrony \citep{Ethan}.

\begin{table}
    \begin{tabular} { l l}
        $\mathcal{P}_i$ & A node in the network\\
        $\mathsf{UNL}_i$ & The unique node list (UNL) selected by $\mathcal{P}_i$ \\
        $n_i, q_i, t_i$ & The size, validation quorum and maximum \\
        & Byzantine faults of $\mathsf{UNL}_i$\\
        $L$ & A ledger \\
        $\mathsf{seq}(L)$ & Sequence number of ledger $L$\\
        $\hat{L}$ & Fully validated ledger with largest sequence\\
        ~ &  number (the fully validated tip ledger)\\
        $\tilde{L}$ & Current working ledger of deliberation \\
        $T = \{x_0,x_1,\ldots\}$ & A set of transactions\\
        $P_{T,r,L, i}$ & Node $\mathcal{P}_i$'s $r$-th deliberation proposal of\\
        ~ & transactions $T$ to apply to $L$ \\
        $V_{L, i}$ & Node $\mathcal{P}_i$'s validation of ledger $L$\\
        $\mathsf{supp}_{tip}(L),\mathsf{supp}_{branch}(L)$ & Tip and branch support of a ledger $L$\\
        $\mathsf{uncommitted}(s)$ & Uncommitted support at sequence number $s$ \\
        $\phi(L,L')$ & Ordering function that is $1$ if $L > L'$ (by hash)\\
        ~ & and $0$ otherwise
    \end{tabular}
    \caption{Summary of notation}
    \label{table:notation}
\end{table}

Table \ref{table:notation} summarizes our notation, including some that will be explained in subsequent sections. 

\section{The XRP Ledger Consensus Protocol} \label{protocol}
The XRP Ledger Consensus Protocol consists of three primary components:
\begin{itemize}
    \item \textbf{Deliberation}, in which nodes iteratively propose a transaction set to apply to a prior ledger, based on proposals received from other trusted nodes. When a node believes enough proposals agree, it applies the corresponding transactions to the prior ledger according to the ledger protocol rules. It then issues a validation for the generated ledger.
    \item \textbf{Validation}, in which nodes decide whether to fully validate a ledger, based on the validations issued by trusted nodes. Once a quorum of validations for the same ledger is reached, that ledger and its ancestors are deemed fully validated and its state is authoritative and irrevocable. 
    \item \textbf{Preferred Branch}, in which nodes determine the preferred working branch of ledger history. In times of asynchrony, network difficulty, or Byzantine failure, nodes may not initially validate the same ledger for a given sequence number. In order to make forward progress and fully validate later ledgers, nodes use the ledger ancestry of trusted validations to resume deliberating on the network's preferred ledger.
\end{itemize}

In short, for each sequence number $s$, each peer $\mathcal{P}_i$ issues a validation $V_{L,i}$ for the ledger $L$ with $s=\mathsf{seq}(L)$ that it expects will be fully validated by validation. Under civil executions, the deliberation process makes it highly likely the validated ledger will match the ledger validated by its trusted peers. In cases when the network is not working normally, preferred branch ensures peers select a common branch such that nodes will later fully validate the same ledger $L'$ with $\mathsf{seq}(L') > s$. This two-step sequence of deliberation and validation is similar to the proof of stake finality gadget recently introduced by \citet{2017arXiv171009437B}. Indeed, the preferred branch protocol shares a common principle with the GHOST rule of \citet{Sompolinksy2015}.

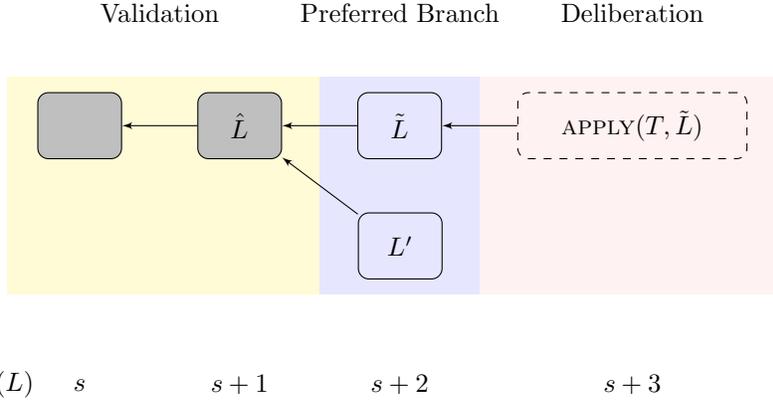
\begin{figure}  
\begin{tikzpicture}[node distance = 1cm, auto]
   \node [ledger, fill=lightgray] (L0) {}; 
    \node [ledger, fill=lightgray,right = of L0] (L1) {$\hat{L}$}; 
    \node [ledger, right = of L1] (L2) {$\tilde{L}$}; 
    \node [ledger, right = of L2,dashed, text width=8em] (L3) {$\textproc{apply}(T,\tilde{L})$}; 
    \node [ledger, below right = of L1, xshift=0.3cm] (L4) {$L'$};
    
    \path [line] (L1) -- (L0);
    \path [line] (L2) -- (L1);
    \path [line] (L3) -- (L2);
    \path [line] (L4) -- (L1);
    \path (L0.east)+(0.5,1.5) node (fv) {Validation};
    \path (L2)+(0,1.5) node (conv) {Preferred Branch};
    \path (L3)+(0,1.5) node (del) {Deliberation};

    \path (L0.south)+(-1.1,-3.0) node (seq) {$\mathsf{seq}(L)$}; 
    \path (L0.south)+(0.0,-3.0) node (s0) {$s$}; 
    \path (L1.south)+(0.0,-3.0) node (s1) {$s+1$}; 
    \path (L2.south)+(0.0,-3.0) node (s2) {$s+2$}; 
    \path (L3.south)+(0.0,-3.0) node (s3) {$s+3$}; 
    \begin{pgfonlayer}{background}
        \path (L0.north west)+(-0.4,0.2) node (a) {};
        \path (L1.east |- L4.south)+(0.5,-0.2) node (b) {};
        \path[fill=yellow!20] (a) rectangle (b);

        \path (L1.north east)+(0.5,0.2) node (a) {};
        \path (L4.south east)+(0.5,-0.2) node (b) {};
        \path[fill=blue!10] (a) rectangle (b);

        \path (L2.north east)+(0.5,0.2) node (a) {};
        \path (L3.south east)+(0.5,-1.8) node (b) {};
        \path[fill=pink!20] (a) rectangle (b);
    \end{pgfonlayer}
\end{tikzpicture}
\caption{Components of XRP LCP. Each rounded rectangle is a ledger with an arrow pointing to its parent.
}
\label{fig:ledgerchain}
\end{figure}

Figure \ref{fig:ledgerchain} is a schematic view of ledger history and shows how these components interact to advance the ledger state from the perspective of a single node. Time flows to the right, with the two grey ledgers on the left defining the fully validated and authoritative ledger chain with tip ledger $\hat{L}$. The dashed ledger on the right represents the deliberation frontier, in which the node is currently negotiating with its trusted nodes on which transactions to apply towards the next ledger. The unfilled ledgers represent two conflicting ledgers $\tilde{L},L'$ that have been validated by different nodes, but which have not received a quorum to fully validate. In this schematic, preferred branch determined the upper ledger $\tilde{L}$ is most likely to be fully validated, so that is the working parent ledger for this node's active deliberation round.

\subsection{Deliberation} \label{protocol:deliberation}
Deliberation is the component of Ripple consensus in which nodes attempt to agree on the set of transactions to apply towards ledgers they validate. Clients submit transactions to one or more nodes in the network, who in turn broadcast the transaction to the rest of the network. Each node maintains a set of these pending transactions that have not been included in a ledger. Starting from this set, a node iteratively proposes new transaction sets based on the support of individual transactions among the sets proposed by nodes in its UNL. Each proposal $P_{T,r,L,i}$ is a tuple of
\begin{itemize}
    \item $T$, the proposing node's current guess of the consensus transaction set.
    \item $r$, the round number of this proposal relative to the other proposals from $\mathcal{P}_i$ based on prior ledger $L$.
    \item $L$, the prior ledger these transactions will apply to.
    \item $i$, the identifier of the node $\mathcal{P}_i$ that broadcasts this proposal.
\end{itemize}
When enough nodes in its UNL propose the same transaction set, a node issues a validation based on that set and begins the next round of deliberation.

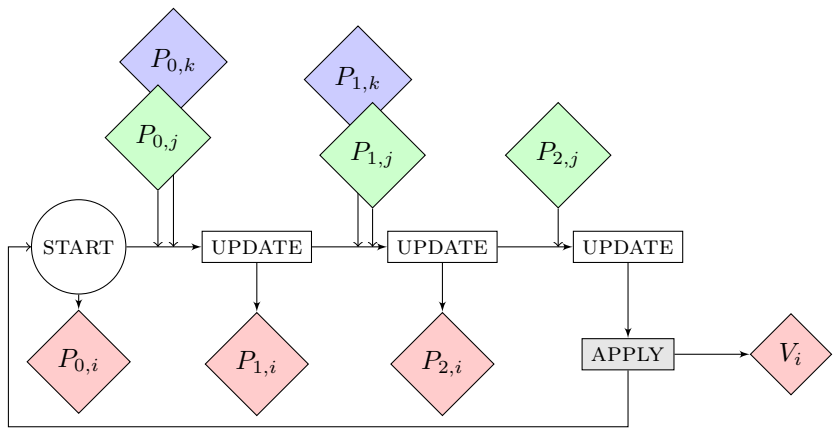
\begin{figure}
    \centering
    \begin{tikzpicture}[node distance = 1cm, auto]
        \node (start) [circle,draw] {\textproc{start}};
        \node (up0) [rectangle, draw, right=of start] {\textproc{update}};
        \node (up1) [rectangle, draw, right=of up0] {\textproc{update}};
        \node (up2) [rectangle, draw, right=of up1] {\textproc{update}};
        \node (p0) [diamond,draw,fill=red!20,below= 0.2 of start] {$P_{0,i}$};
        \node (p1) [diamond,draw,fill=red!20,below= 0.65 of up0] {$P_{1,i}$};
        \node (p2) [diamond,draw,fill=red!20,below= 0.65 of up1] {$P_{2,i}$};        
        \node (nodek0) [diamond,draw,fill=blue!20,above right = 1.65 and 0.45 of start] {$P_{0,k}$};
        \node (nodej0) [diamond,draw,fill=green!20,above right = 0.65 and 0.25 of start] {$P_{0,j}$};
        \node (nodek1) [diamond,draw,fill=blue!20,above right = 1.65 and 0.25 of up0] {$P_{1,k}$};
        \node (nodej1) [diamond,draw,fill=green!20,above right = 0.65 and 0.45 of up0] {$P_{1,j}$};
        \node (nodej2) [diamond,draw,fill=green!20,above right = 0.65 and 0.45 of up1] {$P_{2,j}$};
        \node (apply) [rectangle,draw,fill=gray!20,below=of up2] {\textproc{apply}};
        \node (validate) [diamond,draw,fill=red!20,right=of apply] {$V_{i}$};
        \path[line] (start) -- (up0);
        \path[line] (up0) -- (up1);
        \path[line] (up1) -- (up2);
        \path[line] (start) -- (p0);
        \path[line] (up0) -- (p1);
        \path[line] (up1) -- (p2);
        \path[line] (up2) -- (apply);
        \path[line] (apply) -- (validate);
        \draw [->] ($ (apply.south) $)
                   -- ++(0,-0.75)
                -| ($ (start.west) - (0.3,0) $)
                -- (start.west);
        \begin{pgfonlayer}{background}
            \draw [->] (nodek0.south) -- (start -| nodek0);
            \draw [->] (nodej0.south) -- (start -| nodej0);
            \draw [->] (nodek1.south) -- (start -| nodek1);
            \draw [->] (nodej1.south) -- (start -| nodej1);
            \draw [->] (nodej2.south) -- (start -| nodej2);
        \end{pgfonlayer}
    \end{tikzpicture}
    \caption{Overview of deliberation from the perspective of $\mathcal{P}_i$. The diamonds are proposals or validations, with all but the deliberation round and node identifier subscripts removed for brevity.}
    \label{fig:deliberation}
\end{figure}

Figure~\ref{fig:deliberation} is a high-level overview of deliberation, presented formally in algorithm~\ref{alg:deliberation}. A node initially calls \textproc{start} to begin a new deliberation round and propose its initial view of the consensus transactions. It asynchronously processes new proposals from trusted nodes, maintaining the set of most recent proposals from each. Proposals are only considered if they are for the same prior ledger $\tilde{L}$. The node regularly \textproc{update}s its proposed consensus transaction set in response to newly received node proposals, only including transactions present in at least $threshold(r)$ of the most recently received proposals from its trusted nodes. The threshold starts as a simple majority of the nodes in the UNL, but ratchets up as deliberation rounds proceed. This ensures slow nodes can't prevent consensus converging. In the XRP Ledger implementation, the threshold goes $0.5\to 0.65\to 0.70 \to 0.95$ as $r$ increases. Each node $\mathcal{P}_i$ declares consensus reached when it sees the quorum $q_i$ of its trusted nodes agree on the transaction set. It then applies the consensus transactions to generate the next ledger $L$, broadcasts its validation $V_{L,i}$ and begins a new round of deliberation. 

It is important to note that a node may only validate one ledger with a given sequence number. In fact, the invariant is for node $\mathcal{P}_i$ to only issue a validation $V_{L,i}$ for a ledger $L$ if $\mathsf{seq}(L)$ is greater than that of any ledger previously validated by $\mathcal{P}_i$. Thus if during deliberation, a node determines it is not working on the preferred branch, it will switch to work on the preferred ledger but will not issue a validation until it has caught back up to the sequence number it was on before switching. 

In the XRP Ledger implementation of deliberation, protocol timing parameters determine the synchronization requirements of node proposals and conditions for ending deliberation. Additional waiting periods between phases of deliberation balance the throughput and latency of transaction processing as well as the network overhead of broadcasting proposals and transaction sets. There are also protocol rules that determine which transactions are in the initial proposal and how transactions that failed to be included are retried in subsequent deliberations. Although this deviates from the abstract algorithm presented in this paper, we believe the changes only obscure the presentation of the algorithm and can be viewed as an optimization for increasing transaction throughput. Most importantly, the safety and liveness results in section~\ref{analysis} do not depend on these details of deliberation.

\subsection{Validation} \label{protocol:fullvalidation}
Validation is the simplest of the three components and is summarized in algorithm~\ref{alg:fullvalidation}. Nodes in the network simply listen for validations from trusted nodes. If a node $\mathcal{P}_i$ sees a quorum $q_i$ of validations for a ledger $L$, then it sets the new fully validated tip ledger $\hat{L}$ to $L$. 

\subsection{Preferred Branch} \label{protocol:preferredBranch}
Validators normally validate a simple chain of ledgers, e.g. $L^A\to L^B\to L^C\ldots$. However, during times of asynchrony, network difficulty, or temporary Byzantine failure during deliberation, not all correct nodes may end up receiving enough validations for any individual ledger to fully validate. When presented with conflicting ledgers, preferred branch is the strategy which determines the preferred chain of ledgers to switch to in order to continue making forward progress. It is based on the shared ancestry of the most recent validated ledgers, $lastVals$, and the following quantities:
\begin{enumerate}
    \item The \textbf{tip support} of a ledger $L$, which is the number of trusted nodes whose most recent validated ledger is $L$,
        \begin{equation}
            \mathsf{supp}_{tip}(L) = |\{V_{L',i} \in lastVals : L = L'\}|.
        \end{equation}
    \item The \textbf{branch support} of a ledger $L$, which is the number of trusted nodes whose most recently validated ledger is either $L$ or is descended from $L$,
        \begin{equation}
            \mathsf{supp}_{branch}(L) = \mathsf{supp}_{tip}(L) + |\{V_{L',i} \in lastVals : L \in ancestors(L')\}|,
        \end{equation}
    where $ancestors(L')$ is the set of ancestors of $L'$, i.e. the parent, grandparent, great-grandparent, etc., all the way back to the genesis ledger.
    \item The \textbf{uncommitted support} on a sequence number $s$, which is the number of trusted nodes whose most recent validated ledger is for a ledger with either sequence lower than $s$ or with sequence lower than that of the largest ledger $L$ validation that we personally have broadcasted:
        \begin{equation}
            \mathsf{uncommitted}(s) = |\{V_{L',i} \in lastVals : \mathsf{seq}(L') < \max(s,\mathsf{seq}(L)).
        \end{equation}
\end{enumerate}

Figure~\ref{fig:ledgerAncestry} shows a motivating example, where each ledger is annotated with the tuple of $(\mathsf{supp}_{tip},\mathsf{supp}_{branch}, \mathsf{uncommitted})$ from the perspective of a node that last validated $L^F$. There are 5 trusted nodes, two that last validated $L^F$ and one each validating $L^B$, $L^D$ and $L^E$. The preferred branch strategy determines that $L^D$ is preferred.

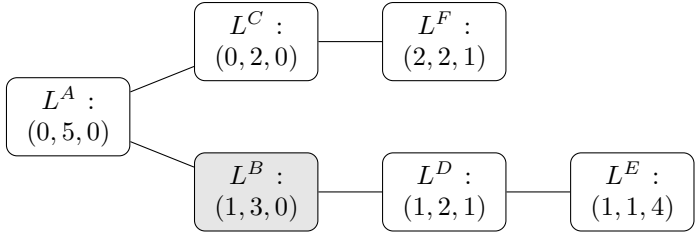
\begin{figure}
    \begin{tikzpicture}[grow=right, level distance=2.5cm, sibling distance=2.0cm]
        \node[wideledger] (LA) {$L^A : (0,5,0) $}
            child { node[wideledger,fill=gray!20] (LB) {$L^B : (1,3,0)$}
                child { node[wideledger] (LD) {$L^D : (1,2,1)$} 
                    child { node[wideledger] (LE) {$L^E : (1,1,4)$}}
                }
            }
            child { node[wideledger] (LC) {$L^C : (0,2,0)$} 
                child { node[wideledger] (LF) {$L^F : (2,2,1)$}
                }
            };
    \end{tikzpicture}
    \caption{Ledger ancestry annotated with tuple of $\mathsf{supp}_{tip}$,$\mathsf{supp}_{branch}$, and $\mathsf{uncommitted})$ from the perspective of the node that last validated $L^F$. Ledger $L^D$ is preferred.}
    \label{fig:ledgerAncestry}
\end{figure}
The preferred branch protocol is provided in Algorithm~\ref{alg:preferredBranch}.  Intuitively, the idea is for each node to be conservative and only switch to a different branch when it knows enough nodes have committed to that chain of ledgers such that an alternative chain cannot have more support. The preferred ledger is found by walking the ancestry tree, starting from the ledger $L$ that is the common ancestor ledger of the most recently validated ledgers. We then select the child ledger $L' \in children(L)$ with highest $\mathsf{supp}_{branch}(L')$, that would still have the most support even if all $\mathsf{uncommitted}(\mathsf{seq}(L'))$ picked a conflicting sibling ledger. If we cannot select a child of $L$ satisfying this requirement, then $L$ is the preferred ledger. If we can find a preferred child $L'$, then we repeat the process on the children of $L'$. In order to ensure a total ordering on ledgers and break ties between sibling ledgers, we rely on a function $\phi(L',L'')$, which is $1$ if the hash of $L' > L''$ and is 0 otherwise. If the ledger found is an ancestor of our current working ledger $\tilde{L}$, we keep $\tilde{L}$ is the preferred ledger, since we do not yet know we are on the wrong branch.

Note that in cases of extreme asynchrony, a branch may only be initially supported by a single node when it becomes preferred. That means other nodes must verify the protocol invariants of that ledger before switching to deliberate on it.

\section{Analysis} \label{analysis}

Having described the XRP LCP from a protocol perspective, we now formally prove results about its safety and liveness.

\subsection{Safety}

In this section we will prove conditions on the network configuration that guarantee different nodes running the XRP LCP will remain consistent.

In the following analysis, we find it convenient to assume that fully validating a ledger \textit{does not} fully validate its ancestors. As we will find, it turns that it may be possible in some configurations for two nodes to fully validate contradictory ledgers with different sequence numbers even if fully validating contradictory ledgers with the same sequence number is impossible. The former is just as problematic as the latter, but we find it convenient to separate out the two failure types and prove conditions preventing them separately.

For the initial analysis, we make a simplifying assumption. Later in the paper we will reanalyze the problem without these assumptions. The assumption we make is that a Byzantine faulty node cannot convince two honest nodes that it validated different ledgers. This assumption was used in the original whitepaper \citep{schwartz2014ripple} and is rationalized by the idea that all communication is done through generic multicast over a peer-to-peer network, so that the ``echoes" of two contradictory messages would be noticed in time for the message to be ignored, since the messages are signed. Unfortunately this assumption does not hold in a fully asynchronous network, since a network partition could segregate the contradictory messages long enough for damage to occur. We call this assumption \textbf{Byzantine accountability}, and will always reference when we are using it as an assumption.

Assuming Byzantine accountability holds, we now analyze when it is possible for two nodes to fully validate different ledgers in a single round of consensus. Modifying the notation slightly, the condition suggested in the whitepaper \citep{schwartz2014ripple} was that two nodes $\mathcal{P}_i,\mathcal{P}_j$ cannot fully validate conflicting ledgers if
	\begin{align*}
	|\mathsf{UNL}_i\cap\mathsf{UNL}_j|\geqslant \max\{n_i-q_i,n_j-q_j\}.
	\end{align*}
With quorums of $80\%$ as suggested, this condition is more easily identified with the actual condition given in the whitepaper,
	\begin{align*}
	|\mathsf{UNL}_i\cap\mathsf{UNL}_j|\geqslant 0.2\max\{n_i,n_j\}.
	\end{align*}

In a later independent analysis, Armknecht et al. \citep{Armknecht2015} showed that this condition is incorrect. They instead suggest that $\mathcal{P}_i,\mathcal{P}_j$ cannot fully validate different ledgers if \textit{and only if}
	\begin{align*}
	|\mathsf{UNL}_i\cap\mathsf{UNL}_j|>2\max\{n_i-q_i,n_j-q_j\}.
	\end{align*}
It is true that if the above condition holds then $\mathcal{P}_i$ and $\mathcal{P}_j$ cannot fully validate different ledgers. However, the converse is not true as the following proposition shows.

\begin{proposition}\label{basicOverlap}
Assuming Byzantine accountability, two honest nodes $\mathcal{P}_i,\mathcal{P}_j$ cannot fully validate different ledgers with the same sequence number iff
	\begin{align*}
	|\mathsf{UNL}_i\cap\mathsf{UNL}_j|>n_i-q_i+n_j-q_j.
	\end{align*}
\end{proposition}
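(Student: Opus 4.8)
The plan is to prove both directions of the biconditional, treating a full validation of a ledger $L$ by $\mathcal{P}_i$ as the event that $\mathcal{P}_i$ observes a set $A\subseteq\mathsf{UNL}_i$ with $|A|\geqslant q_i$ whose members each validated $L$, and likewise a set $B\subseteq\mathsf{UNL}_j$ with $|B|\geqslant q_j$ validating a conflicting $L'$ of the same sequence number for $\mathcal{P}_j$. I write $O=|\mathsf{UNL}_i\cap\mathsf{UNL}_j|$ and $K=\mathsf{UNL}_i\cap\mathsf{UNL}_j$ throughout.

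For the forward direction (the overlap bound forbids a fork), I would first invoke Byzantine accountability to conclude that $A$ and $B$ are disjoint: a node lying in both would have convinced $\mathcal{P}_i$ that it validated $L$ and $\mathcal{P}_j$ that it validated $L'\neq L$ at the same sequence, which accountability rules out. The rest is a quorum-intersection count confined to the overlap. Since $A\subseteq\mathsf{UNL}_i$ has at most $n_i-O$ members outside $K$, we get $|A\cap K|\geqslant q_i-(n_i-O)$, and symmetrically $|B\cap K|\geqslant q_j-(n_j-O)$. As $A\cap K$ and $B\cap K$ are disjoint subsets of $K$, their sizes sum to at most $O$, so
\begin{align*}
(q_i-n_i+O)+(q_j-n_j+O)\leqslant O,
\end{align*}
which rearranges to $O\leqslant(n_i-q_i)+(n_j-q_j)$. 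Contrapositively, if $O>(n_i-q_i)+(n_j-q_j)$ then no such pair of quorums can coexist, so the two honest nodes cannot fully validate different ledgers of the same sequence number.

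For the converse I would exhibit a forking execution whenever $O\leqslant(n_i-q_i)+(n_j-q_j)$. Partition $K$ into disjoint $K_L,K_{L'}$ of sizes $\max(0,q_i-(n_i-O))$ and $\max(0,q_j-(n_j-O))$; the overlap bound is exactly the inequality $|K_L|+|K_{L'}|\leqslant O$ that makes this split possible. Then let every node in $(\mathsf{UNL}_i\setminus\mathsf{UNL}_j)\cup K_L$ validate $L$ and every node in $(\mathsf{UNL}_j\setminus\mathsf{UNL}_i)\cup K_{L'}$ validate $L'$, which delivers at least $q_i$ validations of $L$ to $\mathcal{P}_i$ and at least $q_j$ validations of $L'$ to $\mathcal{P}_j$. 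Crucially, I would keep all of these nodes \emph{honest}, split only by adversarial message scheduling under arbitrary asynchrony, so that each node legitimately reaches its local quorum on the corresponding transaction set before hearing the conflicting proposals. The execution then uses zero Byzantine faults and respects the fault bound for any $t_i,t_j$ trivially.

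The main obstacle is the converse realizability, not the arithmetic. I must argue that the postulated assignment of validations arises from a genuine protocol execution with $\mathcal{P}_i,\mathcal{P}_j$ honest; the clean route is to push the entire conflict into scheduling of the single deliberation round, consistent with the paper's stated abstraction that the safety results do not depend on deliberation details. It is worth noting why the honest-disagreement construction is necessary rather than merely convenient: if instead one tried to realize the split using Byzantine nodes inside $K$, those nodes would lie in both UNLs and so be charged against both $t_i$ and $t_j$, forcing the far stronger requirement $O\leqslant n_i-q_i$; the arbitrary-asynchrony assumption is precisely what lets honest nodes supply the split without spending any fault budget. Finally I would check the degenerate cases where one node attains its quorum entirely from its exclusive members (so the corresponding $K_\bullet$ is empty), which is why the $\max(0,\cdot)$ appears in the sizes above.
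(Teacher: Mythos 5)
Your proof is correct and follows essentially the same route as the paper's: the forward direction is the same quorum-intersection count (the paper phrases it one-sidedly, bounding $|S\cap\mathsf{UNL}_j|$ from $\mathcal{P}_i$'s quorum $S$ alone, whereas you symmetrize via disjointness of both quorums, but the arithmetic is identical), and your converse construction with the $\max(0,\cdot)$ sizes is just a unified version of the paper's two-case assignment of validators to $L$ and $L'$. Your worry about realizability of the converse is reasonable but the paper handles it no more carefully than you do --- it simply posits the validation assignment, later justifying this by the blanket assumption that deliberation can terminate arbitrarily.
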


Note that $2\max\{n_i-q_i,n_j-q_j\}\geqslant n_i-q_i+n_j-q_j$, but $n_i-q_i+n_j-q_j$ is strictly smaller whenever $n_i-q_i\neq n_j-q_j$. Thus the condition suggested by Armknecht et al. is sufficient but not necessary.

\begin{proof}[Proof of proposition \ref{basicOverlap}]
We first prove sufficiency. Suppose $\mathcal{P}_i$ fully validates the ledger $L$ and $|\mathsf{UNL}_i\cap\mathsf{UNL}_j|>n_i-q_i+n_j-q_j$.

Let $S$ be the set of nodes in $\mathsf{UNL}_i$ that validated $L$. Since $\mathcal{P}_i$ fully validated $L$, $|S|\geqslant q_i$. By Byzantine accountability, every node in $S\cap\mathsf{UNL}_j$ could not have sent a validation to $\mathcal{P}_j$ for any ledger $L'\neq L$. Thus it suffices to show that $|S\cap\mathsf{UNL}_j|>n_j-q_j$, since then there cannot be any ledger $L'\neq L$ with $q_j$ support in $\mathsf{UNL}_j$.

By the overlap hypothesis, we have
	\begin{align*}
	|S\cap\mathsf{UNL}_j|&=|S|-|S\setminus\mathsf{UNL}_j|\\
	&\geqslant |S|-|\mathsf{UNL}_i\setminus\mathsf{UNL}_j|\\
	&=|S|-\left(|\mathsf{UNL}_i|-|\mathsf{UNL}_i\cap\mathsf{UNL}_j|\right)\\
	&>|S|-\left(n_i-\left(n_i-q_i+n_j-q_j\right)\right)\\
	&\geqslant q_i-\left(n_i-\left(n_i-q_i+n_j-q_j\right)\right)\\
	&=n_j-q_j.
	\end{align*}

For necessity, first suppose $|\mathsf{UNL}_i\cap\mathsf{UNL}_j|\leqslant n_j-q_j$. Then all the nodes in $\mathsf{UNL}_i$ can validate $L$, while all the nodes in $\mathsf{UNL}_j\setminus\mathsf{UNL}_i$ validate $L'$, and by assumption $|\mathsf{UNL}_j\setminus\mathsf{UNL}_i|\geqslant q_j$ so $\mathcal{P}_j$ fully validates $L'$ while $\mathcal{P}_i$ fully validates $L$.

Now suppose $|\mathsf{UNL}_i\cap\mathsf{UNL}_j|\leqslant n_i-q_i+n_j-q_j$ and $|\mathsf{UNL}_i\cap\mathsf{UNL}_j|>n_j-q_j$. Then
	\begin{align*}
	|\mathsf{UNL}_i\setminus\mathsf{UNL_j}|&=|\mathsf{UNL}_i|-|\mathsf{UNL}_i\cap\mathsf{UNL_j}|\\
	&\geqslant n_i-\left(n_i-q_i+n_j-q_j\right)\\
	&=q_i+q_j-n_j.
	\end{align*}

Thus if all nodes in $\mathsf{UNL}_i\setminus\mathsf{UNL_j}$ validate $L$ and $n_j-q_j$ nodes in $\mathsf{UNL}_i\cap\mathsf{UNL}_j$ also validate $L$ (which is possible since $|\mathsf{UNL}_i\cap\mathsf{UNL}_j|>n_j-q_j$ by assumption), then $\mathcal{P}_i$ will receive $(q_i+q_j-n_j)+(n_j-q_j)=q_i$ validations for $L$ and fully validate $L$. Meanwhile, if all the other nodes in $\mathsf{UNL}_j$ validate $L'$, then since only $n_j-q_j$ nodes in $\mathsf{UNL}_j$ validated $L$, $n_j-(n_j-q_j)=q_j$ nodes will validate $L'$, so $\mathcal{P}_j$ will fully validate $L'$.
\end{proof}

Assuming a quorum of $80\%$, these overlap conditions may be summarized as follows:
\begin{itemize}
	\item Schwartz et al.: Every pair of nodes needs an overlap of $20\%$ the \textit{maximum} size of their respective UNLs.
	\item Armknecht et al.: Every pair of nodes needs an overlap of $41\%$ the \textit{maximum} size of their respective UNLs.
	\item Actual condition: Every pair of nodes needs an overlap of $41\%$ of the \textit{average} size of their respective UNLs.
\end{itemize}

For the remainder of the paper we will no longer assume Byzantine accountability. In a live network, one would prefer absolute safety rather than relying on brittle heuristics that suggest it is unlikely that a Byzantine node could send conflicting messages to different nodes without getting caught. Thus we follow the research convention and assume that Byzantine nodes can send arbitrary messages to arbitrary nodes.

For any pair of nodes $\mathcal{P}_i$ and $\mathcal{P}_j$, let $\mathsf{O}_{i,j}=|\mathsf{UNL}_i\cap\mathsf{UNL}_j|$ and let $t_{i,j}=\min\{t_i,t_j,\mathsf{O}_{i,j}\}$. $t_{i,j}$ is the maximum number of allowed Byzantine faults in $\mathsf{UNL}_i\cap\mathsf{UNL}_j$, assuming that there are at most $t_i$ faults in $\mathsf{UNL}_i$ and at most $t_j$ faults in $\mathsf{UNL}_j$.

The following lemma will be useful throughout the paper.

\begin{lemma}\label{mValidations}
If an honest node $\mathcal{P}_i$ sees $m$ validations for the ledger $L$ with $\mathsf{seq}(L)=s$, then for any other honest node $\mathcal{P}_j$, there are at least $\mathsf{O}_{i,j}+m-n_i-t_{i,j}$ honest nodes in $\mathsf{UNL}_j$ that validated $L$. Furthermore, there can be \textit{exactly} $\mathsf{O}_{i,j}+m-n_i-t_{i,j}$ honest nodes in $\mathsf{UNL}_j$ that validated $L$.
\end{lemma}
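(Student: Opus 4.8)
The plan is to prove the ``at least'' bound by a direct counting argument and then to establish the ``exactly'' claim with an explicit adversarial construction. The central conceptual point—which is exactly what changes now that Byzantine accountability has been dropped—is that a Byzantine node in $\mathsf{UNL}_i$ may present a validation for $L$ to $\mathcal{P}_i$ while showing $\mathcal{P}_j$ something else (or nothing). So only the \emph{honest} validators that $\mathcal{P}_i$ sees are guaranteed to also appear to $\mathcal{P}_j$ as having validated $L$, since honest nodes issue a single consistent validation per sequence number. The quantity we must lower-bound is therefore the number of honest nodes in $\mathsf{UNL}_j$ that validated $L$, and the natural witnesses for this are the honest validators common to both UNLs.

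First I would fix the set $S\subseteq\mathsf{UNL}_i$ of the $m$ nodes whose validations for $L$ were seen by $\mathcal{P}_i$, so $|S|=m$. The counting then proceeds in two steps. Step one bounds how much of $S$ can lie outside $\mathsf{UNL}_j$: since $|\mathsf{UNL}_i\setminus\mathsf{UNL}_j|=n_i-\mathsf{O}_{i,j}$, we get
\begin{align*}
|S\cap\mathsf{UNL}_j|\geqslant |S|-|\mathsf{UNL}_i\setminus\mathsf{UNL}_j| = m-(n_i-\mathsf{O}_{i,j}) = \mathsf{O}_{i,j}+m-n_i.
\end{align*}
Step two discards the Byzantine members: because $S\cap\mathsf{UNL}_j\subseteq\mathsf{UNL}_i\cap\mathsf{UNL}_j$ and the number of Byzantine nodes in the intersection is at most $t_{i,j}=\min\{t_i,t_j,\mathsf{O}_{i,j}\}$ by definition, subtracting $t_{i,j}$ leaves at least $\mathsf{O}_{i,j}+m-n_i-t_{i,j}$ honest nodes in $S\cap\mathsf{UNL}_j$. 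Each of these is in $\mathsf{UNL}_j$ and validated $L$, proving the lower bound.

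For tightness I would exhibit a configuration meeting the bound with equality by saturating every inequality above. To make $\mathcal{P}_i$ see exactly $m$ validators while minimizing honest validators in $\mathsf{UNL}_j$, I would (i) place as many validators as possible outside $\mathsf{UNL}_j$, letting all $n_i-\mathsf{O}_{i,j}$ nodes of $\mathsf{UNL}_i\setminus\mathsf{UNL}_j$ validate $L$; (ii) take the remaining $\mathsf{O}_{i,j}+m-n_i$ validators inside $\mathsf{UNL}_i\cap\mathsf{UNL}_j$, and designate $t_{i,j}$ of them Byzantine, equivocating so they show $L$ to $\mathcal{P}_i$ but a conflicting ledger to $\mathcal{P}_j$; and (iii) have every remaining honest node of $\mathsf{UNL}_j$ validate a different ledger $L'$. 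This leaves precisely $\mathsf{O}_{i,j}+m-n_i-t_{i,j}$ honest nodes in $\mathsf{UNL}_j$ validating $L$.

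The main obstacle I anticipate is in the tightness half rather than the bound: I must check that the construction respects all fault budgets simultaneously—that the $t_{i,j}$ equivocators genuinely fit within both $t_i$ and $t_j$ and within $\mathsf{O}_{i,j}$ (which is exactly why $t_{i,j}$ is defined as a minimum)—and that the resulting scenario is consistent with the protocol, i.e.\ that honest nodes can legitimately be made to validate $L$ versus $L'$ under some admissible execution and message schedule. I would also need to handle the degenerate regime where $\mathsf{O}_{i,j}+m-n_i-t_{i,j}$ is non-positive, in which case the ``at least'' claim is vacuous and the ``exactly'' claim should be read as the bound being achievable at zero; verifying the arithmetic edge cases is the only delicate bookkeeping here, as the core counting is routine.
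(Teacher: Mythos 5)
Your proposal is correct and follows essentially the same route as the paper's own proof: the same set $S$ of validators seen by $\mathcal{P}_i$, the same two-step count ($|S\cap\mathsf{UNL}_j|\geqslant m-n_i+\mathsf{O}_{i,j}$, then subtract $t_{i,j}$ equivocators), and the same tightness construction placing all of $\mathsf{UNL}_i\setminus\mathsf{UNL}_j$ on $L$ and the $t_{i,j}$ Byzantine nodes in the intersection. Your added remarks on fault-budget consistency and the degenerate non-positive case are sensible bookkeeping that the paper leaves implicit.
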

\begin{corollary}\label{mValidationsInverse}
If an honest node $\mathcal{P}_i$ sees $m$ validations for the ledger $L$ with $\mathsf{seq}(L)=s$, then $\mathcal{P}_j$ can see at most $n_i+n_j-\mathsf{O}_{i,j}-m+t_{i,j}$ validations for any contradictory ledger $L'$ with $\mathsf{seq}(L')=s$. Furthermore, it is possible for $\mathcal{P}_j$ to see exactly $n_i+n_j-\mathsf{O}_{i,j}-m+t_{i,j}$ validations for a contradictory ledger $L'$ with sequence number $s$.
\end{corollary}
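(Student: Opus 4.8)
The plan is to derive this directly from Lemma~\ref{mValidations}, combined with the protocol invariant that each honest node validates at most one ledger per sequence number. The guiding observation is that any honest node in $\mathsf{UNL}_j$ that validated $L$ is thereby barred from validating the contradictory ledger $L'$, since $\mathsf{seq}(L')=\mathsf{seq}(L)=s$ and an honest node never validates two distinct ledgers of the same sequence number. The upper bound then becomes a simple counting argument, and tightness is inherited from the tightness clause of the lemma.

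First I would establish the upper bound. Lemma~\ref{mValidations} gives that at least $\mathsf{O}_{i,j}+m-n_i-t_{i,j}$ honest nodes of $\mathsf{UNL}_j$ validated $L$, and none of these can validate $L'$. Hence the number of $\mathsf{UNL}_j$ nodes that $\mathcal{P}_j$ can see validating $L'$ is at most the full UNL size minus this count,
\[
n_j-(\mathsf{O}_{i,j}+m-n_i-t_{i,j})=n_i+n_j-\mathsf{O}_{i,j}-m+t_{i,j},
\]
which is precisely the claimed bound.

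For tightness I would invoke the second part of Lemma~\ref{mValidations}, which furnishes a configuration in which \emph{exactly} $\mathsf{O}_{i,j}+m-n_i-t_{i,j}$ honest nodes of $\mathsf{UNL}_j$ validated $L$. In that configuration the only nodes of $\mathsf{UNL}_j$ prevented from validating $L'$ are these honest $L$-validators, so I would let every remaining node of $\mathsf{UNL}_j$ validate $L'$. These remaining nodes decompose as all of $\mathsf{UNL}_j\setminus\mathsf{UNL}_i$, the honest overlap nodes that did not validate $L$, and the $t_{i,j}$ Byzantine overlap nodes, which — now that Byzantine accountability is no longer assumed — may report $L$ to $\mathcal{P}_i$ while reporting $L'$ to $\mathcal{P}_j$. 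Summing these counts yields exactly $n_i+n_j-\mathsf{O}_{i,j}-m+t_{i,j}$ validations for $L'$.

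The step requiring the most care is the tightness construction: I must confirm that the witnessing configuration from Lemma~\ref{mValidations} can be extended so that all non-$L$-validating nodes of $\mathsf{UNL}_j$ consistently validate $L'$ without overrunning the fault budget $t_{i,j}=\min\{t_i,t_j,\mathsf{O}_{i,j}\}$. Since the only faulty assignments used are the $t_{i,j}$ equivocating overlap nodes and every other node is honest, the budget is respected and the bound is attained with equality.
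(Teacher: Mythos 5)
Your proof is correct and follows essentially the same route as the paper: the upper bound comes from subtracting the Lemma~\ref{mValidations} count of honest $L$-validators in $\mathsf{UNL}_j$ from $n_j$, and tightness is inherited from the lemma's exactness clause by letting every remaining node (including the $t_{i,j}$ equivocating overlap nodes) report $L'$ to $\mathcal{P}_j$. Your explicit decomposition of the remaining nodes is a slightly more detailed rendering of what the paper states in one line, but the argument is the same.
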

\begin{proof}
If an honest node validates $L$, then $\mathcal{P}_j$ cannot receive a validation for any contradictory ledger $L'$ from it. By lemma \ref{mValidations}, there are at least $\mathsf{O}_{i,j}+m-n_i-t_{i,j}$ honest nodes that validate $L$. If every other node in $\mathsf{UNL}_j$ sends a validation to $\mathcal{P}_j$ for some contradictory ledger $L'$, then $\mathcal{P}_j$ can receive up to (and including, by the secondary clause of lemma \ref{mValidations})
\begin{align*}
n_j-\left(\mathsf{O}_{i,j}+m-n_i-t_{i,j}\right)=n_i+n_j-\mathsf{O}_{i,j}-m+t_{i,j}
\end{align*}
validations for $L'$ with $\mathsf{seq}(L')=s$.
\end{proof}

Note that corollary \ref{mValidationsInverse} does not preclude the possibility that $\mathcal{P}_j$ will see more than $n_i+n_j-\mathsf{O}_{i,j}-m+t_{i,j}$ validations for a contradictory ledger with a larger sequence number than $s$. Indeed, without assuming totality, it turns out that such an occurrence is possible, which forces the algorithm to use much tighter safety margins.

\begin{proof}[Proof of lemma \ref{mValidations}]
The proof is similar to the proof of proposition \ref{basicOverlap}.

Suppose $\mathcal{P}_i$ sees $m$ validations for $L$. Again letting $S$ be the set of nodes in $\mathsf{UNL}_i$ that sent validations to $\mathcal{P}_i$ for $L$, then
	\begin{align*}
	|S\cap\mathsf{UNL}_j|&=|S|-|S\setminus\mathsf{UNL}_j|\\
	&\geqslant |S|-|\mathsf{UNL}_i\setminus\mathsf{UNL}_j|\\
	&=|S|-(n_i-\mathsf{O}_{i,j})\\
	&=m-n_i+\mathsf{O}_{i,j}.
	\end{align*}

There could be $t_{i,j}$ Byzantine nodes in $S\cap\mathsf{UNL}_j$ that send $\mathcal{P}_j$ a validation for something other than $L$, so at least $m-n_i+\mathsf{O}_{i,j}-t_{i,j}$ \textit{honest} nodes validated $L$.

For the second point, assume that every node in $\mathsf{UNL}_i\setminus\mathsf{UNL}_j$ validates $L$, $t_{i,j}$ Byzantine nodes in $\mathsf{UNL}_i\cap\mathsf{UNL}_j$ send a validation for $L$ to $\mathcal{P}_i$ and $L'$ to $\mathcal{P}_j$. Then there are exactly $m-n_i+\mathsf{O}_{i,j}-t_{i,j}$ honest nodes in $\mathsf{UNL}_i\cap\mathsf{UNL}_j$ that send a validation for $L$ to $\mathcal{P}_j$. Since every node in $\mathsf{UNL}_j\setminus\mathsf{UNL}_i$ can validate some ledger other than $L$, there can be exactly $m-n_i+\mathsf{O}_{i,j}-t_{i,j}$ honest nodes in $\mathsf{UNL}_j$ that send a validation for $L$ to $\mathcal{P}_j$.
\end{proof}

\begin{proposition}\label{bftImmediateOverlap}
$\mathcal{P}_i$ fully validating some ledger $L$ with $\mathsf{seq}(L)=s$ implies that $\mathcal{P}_j$ cannot fully validate any contradictory ledger with the same sequence number $s$ iff $\mathsf{O}_{i,j}>(n_i-q_i)+(n_j-q_j)+t_{i,j}$.
\end{proposition}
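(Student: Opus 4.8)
The plan is to reduce the statement almost entirely to Corollary \ref{mValidationsInverse}, in the same way that Proposition \ref{basicOverlap} reduced to its Byzantine-accountability analog. The key reformulation is that ``$\mathcal{P}_i$ fully validates $L$'' means exactly that $\mathcal{P}_i$ sees at least $q_i$ validations for $L$, and ``$\mathcal{P}_j$ fully validates a contradictory ledger $L'$ with the same sequence number'' means $\mathcal{P}_j$ sees at least $q_j$ validations for $L'$. So the entire proposition is equivalent to the assertion that, once $\mathcal{P}_i$ has its quorum for $L$, the maximum number of validations $\mathcal{P}_j$ can possibly see for any contradictory same-sequence ledger stays strictly below $q_j$.

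For sufficiency I would assume the overlap bound $\mathsf{O}_{i,j}>(n_i-q_i)+(n_j-q_j)+t_{i,j}$ and that $\mathcal{P}_i$ fully validates $L$, so it sees some $m\geq q_i$ validations for $L$. Corollary \ref{mValidationsInverse} bounds the validations $\mathcal{P}_j$ can see for any contradictory $L'$ at the same sequence by $n_i+n_j-\mathsf{O}_{i,j}-m+t_{i,j}$, and since this is decreasing in $m$, the adversary's best case is the smallest admissible $m$, namely $m=q_i$. Substituting $m=q_i$ and invoking the overlap hypothesis yields $n_i+n_j-\mathsf{O}_{i,j}-q_i+t_{i,j}<q_j$, so $\mathcal{P}_j$ necessarily falls short of its quorum and cannot fully validate $L'$.

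For necessity I would argue the contrapositive: assume $\mathsf{O}_{i,j}\leq(n_i-q_i)+(n_j-q_j)+t_{i,j}$ and exhibit a single execution in which both nodes fully validate contradictory ledgers. Here the tightness (``furthermore'') clause of Corollary \ref{mValidationsInverse}, inherited from Lemma \ref{mValidations}, does the real work: with $m=q_i$ it certifies a realizable adversarial schedule in which $\mathcal{P}_i$ sees exactly $q_i$ validations for $L$ (hence fully validates $L$) while $\mathcal{P}_j$ simultaneously sees exactly $n_i+n_j-\mathsf{O}_{i,j}-q_i+t_{i,j}$ validations for a contradictory $L'$. The overlap assumption forces that quantity to be $\geq q_j$, so $\mathcal{P}_j$ fully validates $L'$, producing the contradictory full validations.

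I do not expect a genuine obstacle, since the heavy lifting is already packaged in the corollary. The only points requiring care are bookkeeping ones: verifying that the bound is monotone decreasing in $m$ so that fully validating with more than $q_i$ validations only improves safety (justifying the extremal choice $m=q_i$), and confirming that the corollary's tightness clause supplies a single self-consistent message assignment making $\mathcal{P}_i$ reach its quorum for $L$ and $\mathcal{P}_j$ reach its quorum for $L'$ at once — which it does, because that clause is proved by an explicit partition of honest and Byzantine validations rather than by two separately optimized constructions.
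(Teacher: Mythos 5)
Your proposal is correct and follows essentially the same route as the paper's own proof: both directions are reduced to Corollary~\ref{mValidationsInverse} with $m=q_i$, sufficiency by the upper bound and necessity by the tightness (``exactly'') clause. Your added remark about monotonicity in $m$ is a minor bookkeeping point the paper leaves implicit, not a different argument.
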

\begin{proof}
By letting $m=q_i$, corollary \ref{mValidationsInverse} tells us that $\mathcal{P}_i$ fully validating $L$ implies that $\mathcal{P}_j$ can see at most $n_i+n_j-\mathsf{O}_{i,j}-q_i+t_{i,j}$ validations for any contradictory ledger with sequence number $s$.

Thus if
\begin{align*}
q_j&>n_i+n_j-\mathsf{O}_{i,j}-q_i+t_{i,j}\\
\mathsf{O}_{i,j}&>n_i-q_i+n_j-q_j+t_{i,j}
\end{align*}
then $\mathcal{P}_j$ cannot fully validate any contradictory ledger with seqeuence number $s$.

For necessity, if $\mathsf{O}_{i,j}\leqslant(n_i-q_i)+(n_j-q_j)+t_{i,j}$, then the second clause of corollary \ref{mValidationsInverse} implies that $\mathcal{P}_j$ can see exactly
\begin{align*}
n_i+n_j-\mathsf{O}_{i,j}-q_i+t_{i,j}&\geqslant n_i+n_j-\left((n_i-q_i)+(n_j-q_j)+t_{i,j}\right)-q_i+t_{i,j}\\
&=q_j
\end{align*}
validations for a contradictory ledger $L'$, allowing $\mathcal{P}_j$ to fully validate $L'$.
\end{proof}

Once again assuming $80\%$ quorums and $20\%$ fault tolerance as in the whitepaper, this overlap condition can be summarized as requiring roughly $61\%$ UNL overlaps.

To see why the overlap hypothesis in proposition \ref{bftImmediateOverlap} does not guarantee full safety, note that it is possible for a node to exit from deliberation for sequence $s$ and then be unable to fully validate any ledger with sequence $s$, as the following example shows.

\begin{figure}  
    \begin{tikzpicture}[align=left, node distance=0.2cm and 0.3cm,
        apeer/.style={circle, draw, fill=white, minimum size=0.8cm},
        transX/.style={diamond,draw,fill=red!20},
        transY/.style={diamond,draw,fill=blue!20,scale=0.9},
        box/.style={rectangle,draw,fill=white}]
        \foreach \x in {1,...,10}
            \node [apeer] (P\x) at (\x, 0)  {\x};


         \foreach \x in {1,...,5}
            \node [transX] (T1\x) at (\x, -1.1)  {$T$};
         \foreach \x in {6,...,10}
            \node [transY] (T1\x) at (\x, -1.1)  {$T'$};

         \node [box] (VX1) at (3,-2.2) {$x_0 : 100\%$\\$x_1 : 50\%$};
         \node [box] (VY1) at (8,-2.2) {$x_0 : 100\%$\\$x_1 : 40\%$};
         \node (Rlabel) at (0.5,-1.8) {Round,$threshold$};
         \node (R1label) at (0,-2.4) {$r-1,50\%$};
         \path[-] (0,-2.85) edge (11,-2.85);
         \begin{pgfonlayer}{background}
            \foreach \x in {1,...,10}
                \draw[->,gray] (T1\x) -- (VX1);
            \foreach \x in {1,2,3,4,6,7,8,9,10}
                \draw[->,gray] (T1\x) -- (VY1);
         \end{pgfonlayer}{background}
         \foreach \x in {1,...,5}
            \node [transX] (T2\x) at (\x, -3.4)  {$T$};
         \foreach \x in {6,...,10}
            \node [transY] (T2\x) at (\x, -3.4)  {$T'$};
        
         \node [box] (VX2) at (3,-4.6) {$x_0 : 100\%$\\$x_1 : 50\%$};
         \node [box] (VY2) at (8,-4.6) {$x_0 : 100\%$\\$x_1 : 50\%$};
         \node (R2label) at (0,-4.8) {$r,50\%$};

         \path[-] (0,-5.25) edge (11,-5.25);
         \begin{pgfonlayer}{background}
            \foreach \x in {1,...,10}
             {
                \draw[->,gray] (T2\x) -- (VX2);
                \draw[->,gray] (T2\x) -- (VY2);
             }
         \end{pgfonlayer}

         \foreach \x in {1,...,10}
            \node [transX] (T3\x) at (\x, -5.8)  {$T$};
        \node [box] (VX3) at (3,-7.0) {$x_0 : 100\%$\\$x_1 : 50\%$};
        \node [box] (VY3) at (8,-7.0) {$x_0 : 100\%$\\$x_1 : 60\%$};
        \node (R3label) at (0,-7.2) {$r+1,65\%$};
        \path[-] (0,-7.65) edge (11,-7.65);
        \begin{pgfonlayer}{background}
            \foreach \x in {2,...,10}
                \draw[->,gray] (T2\x) -- (VX3);
            \draw[->,gray] (T31) -- (VX3);
            \foreach \x in {1,...,9}
                \draw[->,gray] (T2\x) -- (VY3);
            \draw[->,gray] (T310) -- (VY3);
         \end{pgfonlayer}{background}
         \foreach \x in {1,...,10}
            \node [transY] (T4\x) at (\x, -8.2)  {$T'$};
        \node (R3label) at (0,-9.0) {$r+2,65\%$};

         \begin{pgfonlayer}{background}
            \node[draw=red, dashed, very thick,fit=(P1) (P5), inner sep=2.2](UNLX) {};
            \node[text=red,above left] at (UNLX.north west) {$X$};
            \node[draw=blue, dashed, very thick,fit=(P6) (P10), inner sep=2.2](UNLY) {};
            \node[text=blue,above right] at (UNLY.north east) {$Y$};
            \node[draw=magenta, dashed,very thick,fit=(T31) (T36),inner sep=-0.1](UNLZ) {};
            \node[text=magenta,left] at (UNLZ.west) {$Z$};
            \node[draw=brown, dashed,very thick,fit=(T47) (T410),inner sep=-0.1](UNLZp) {};
            \node[text=brown,right] at (UNLZp.east) {$Z'$};
         \end{pgfonlayer}
    \end{tikzpicture}
    \caption{Schematic of example~\ref{deliberationEarlyExit}. The two node groups $X$ and $Y$ begin by proposing $T=\{x_0,x_1\}$ and $T'=\{x_0\}$ respectively. The left (right) boxes reflect proposals seen by node 1 (10) and are representive of all nodes in group $X$ ($Y$). Gray arrows indicate proposals that were received in time to calculate the thresholds for a given round. Note that in round $r-1$, the proposal from node 5 was not received by nodes in $Y$. The two partitions $Z$ and $Z'$ are represented by the dashed boxes. Note that all nodes share the same single UNL.}
    \label{fig:deliberationEarlyExit}
\end{figure}
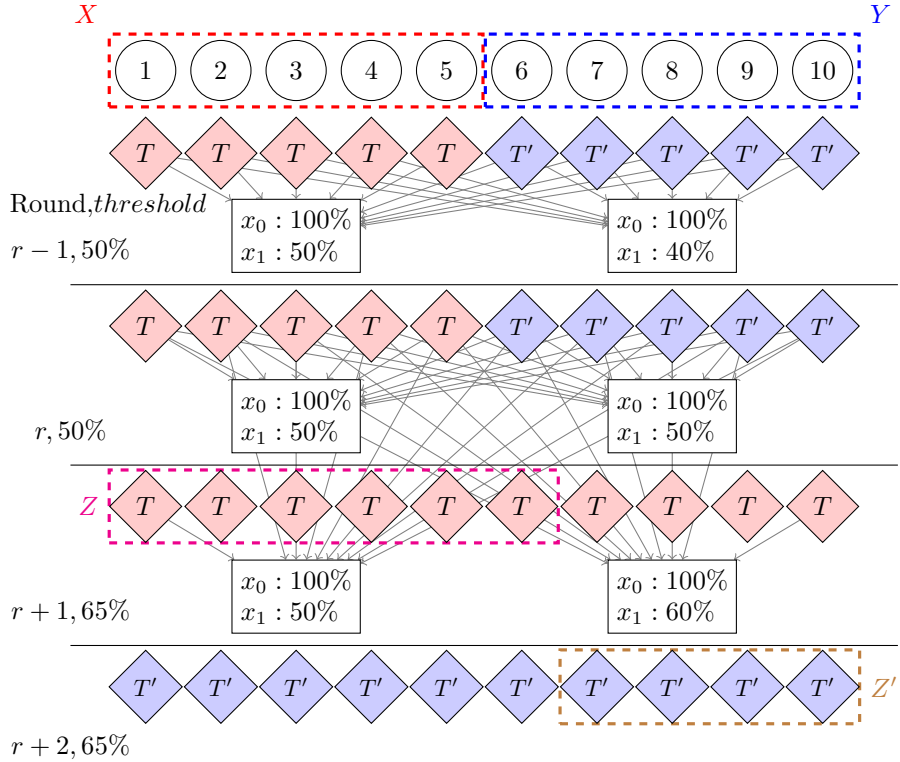

\begin{example}\label{deliberationEarlyExit}
\end{example} Consider a complete network with $10$ nodes. Let $X$ denote the first $5$ nodes and $Y$ denote the other $5$ nodes. As shown in figure \ref{fig:deliberationEarlyExit}, suppose all the nodes in $X$ begin deliberation proposing the transaction set $T=\{x_0,x_1\}$ and all the nodes in $Y$ begin deliberation proposing the set $T'=\{x_0\}$. Thus for a transaction threshold of $\tau$, receiving $\tau$ proposals for $T$ will cause an honest node to propose $T$, while receiving less that $\tau$ proposals for $T$ but $\tau$ proposals for either $T$ \textit{or} $T'$ will cause an honest node to propose $T'$, since $T \cap T' = T'$.

Let $r$ be such that the transaction threshold is $50\%$ in deliberation round $r$ and $65\%$ in round $r+1$. By the ratcheting threshold protocol described in subsection~\ref{protocol:deliberation}, such an $r$ exists. During the first $r-1$ rounds, the nodes in $X$ receive all the proposals, while the nodes in $Y$ receive all the proposals except for one proposal from a node in $X$. Since the nodes in $X$ propose $T$ while the nodes in $Y$ propose $T'$, the nodes in $X$ continue proposing $T$, while the nodes in $Y$ only receive $4$ proposals for $T$ and continue proposing $T'$ by the assumptions on $T$ and $T'$.

Now in round $r$ all nodes receive all proposals. This causes all nodes to propose $T$ in round $r+1$. But in round $r+1$, a network failure causes none of the nodes to receive anyone else's proposals. Based on the most recently received deliberation proposals, everyone assumes that all the other nodes are still proposing what they proposed in the previous round. Thus the nodes in $X$ see only $5$ proposals for $T$ while the nodes in $Y$ see only $6$ proposals for $T$ (since they of course receive their own updated proposals). No one sees $65\%$ support for $x_1$, but everyone sees $100\%$ support for $x_0$, so everyone proposes $T'$ in round $r+2$.

Now pick an arbitrary partition of the network into two sets, $Z$ and $Z'$. The nodes in $Z$ receive all the proposals from round $r+1$ but none of the proposals from round $r+2$. Thus they see $100\%$ support for $T$ and validate $T$. Meanwhile the nodes in $Z'$ receive all the proposals from round $k+2$, see $100\%$ support for $T'$, and validate $T'$. Thus we can exit from deliberation with two arbitrary subsets validating different ledgers. In this case if $|Z|>2$ and $|Z'|>2$, then none of the nodes will fully validate a ledger during this consensus round.

\vspace*{1em}

Because of examples like example \ref{deliberationEarlyExit}, we therefore make the broad assumption that deliberation can terminate with an arbitrary result. In practice, this may require a significantly degraded network, but is nonetheless a real risk. From a theoretical perspective, deliberation is therefore completely irrelevant; it is purely an optimization that makes it so that during civil executions most nodes will go into validation with the same ledger, allowing every node to fully validate usually, and it could be removed without fundamentally changing the protocol.

Thus we shift our focus towards validation without making any assumptions about the result of deliberation. We need to prove that if any node fully validates a ledger $L$, then it is never possible for a node to fully validate a ledger $L'$ such that $\mathsf{seq}(L')\geqslant\mathsf{seq}(L)$ and $L'$ is not a descendant of $L$. The following lemma provides the route for guaranteeing this.

\begin{lemma}\label{preferredBranchStability}
If for every node $\mathcal{P}_i$, there are more than $n_i/2$ honest nodes in $\mathsf{UNL}_i$ that submit a validation for some ledger $L$ with $\mathsf{seq}(L)$, then for every node $\mathcal{P}_i$ more than $n_i/2$ honest nodes in $\mathsf{UNL}_i$ will \emph{always} submit validations for ledgers descended from $L$.
\end{lemma}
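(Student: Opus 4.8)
The plan is to prove the statement as a temporal invariant by induction over the sequence of validation events, relying on two properties guaranteed by the protocol: first, each honest node issues validations for strictly increasing sequence numbers (Section~\ref{protocol:deliberation}); second, the preferred branch rule (Section~\ref{protocol:preferredBranch}) is deliberately conservative, abandoning a branch only when a sibling branch is seen to dominate even after every uncommitted validator is assumed to defect. Fix $L$ with $\mathsf{seq}(L)=s$ and, for each node $\mathcal{P}_i$, let $G_i\subseteq\mathsf{UNL}_i$ be the honest nodes that validated $L$, so that $|G_i|>n_i/2$ by hypothesis. I would show that every node in the global witness set $G=\bigcup_i G_i$ forever validates only descendants of $L$; since each $\mathcal{P}_i$ sees more than $n_i/2$ of these witnesses, this immediately yields the conclusion.

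Because honest validations strictly increase in sequence and every witness has validated $L$ at sequence $s$, a witness can leave $L$'s branch only by validating some $L^\ast$ with $\mathsf{seq}(L^\ast)>s$ not descended from $L$. Suppose for contradiction this ever occurs, and let $t^\ast$ be the first time any witness $\mathcal{P}_k\in G$ does so. By minimality of $t^\ast$, just before $t^\ast$ every witness still has its most recent validation on $L$'s branch at sequence $\geq s$; in particular more than $n_k/2$ honest validators in $\mathsf{UNL}_k$ are committed to $L$'s branch. For $\mathcal{P}_k$ to validate $L^\ast$, its preferred branch walk must have selected, at the level of $L$, some sibling $L''$ of $L$ (with $L^\ast$ descended from $L''$) as the preferred child. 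I would then derive a contradiction by showing that no sibling of $L$ can ever be selected.

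The key counting observation is that a witness $\mathcal{P}_m\in G$ never appears on a competing branch in $\mathcal{P}_k$'s view: since $\mathcal{P}_m$ validated $L$ at sequence $s$ and, before $t^\ast$, never validated off $L$'s branch at any sequence $\geq s$, the most recent validation $\mathcal{P}_k$ holds from $\mathcal{P}_m$ is either on $L$'s branch at sequence $\geq s$ (contributing to $\mathsf{supp}_{branch}(L)$) or at sequence $<s$ (contributing to $\mathsf{uncommitted}(s)$ and to no sibling's branch support). Hence $\mathsf{supp}_{branch}(L'')\leq n_k-|G_k|<n_k/2$ for every sibling $L''$, while exactly those witnesses that fail to reinforce $\mathsf{supp}_{branch}(L)$ are the ones discounted by $\mathsf{uncommitted}(s)$. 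Substituting these bounds into the selection criterion, which requires $\mathsf{supp}_{branch}(L'')-\mathsf{uncommitted}(s)>\mathsf{supp}_{branch}(L)$ for $L''$ to be chosen, shows the criterion cannot be met, so $\mathcal{P}_k$ retains its working ledger on $L$'s branch, contradicting the choice of $t^\ast$.

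The main obstacle is precisely this asynchronous bookkeeping. A witness whose latest message has not reached $\mathcal{P}_k$ may appear to lag or even to sit at a stale ledger below sequence $s$, so the naive assertion that $\mathsf{supp}_{branch}(L)>n_k/2$ in $\mathcal{P}_k$'s local view is simply false, and one cannot argue branch by branch support alone. The real content of the proof is showing that the $\mathsf{uncommitted}(s)$ discount built into the preferred branch rule compensates for every stale witness, and that this domination of $L$ over each sibling is maintained at every level of the walk from the common ancestor down to $L$, not merely at the final step; the subtle case is a witness whose recent validation $\mathcal{P}_k$ has not yet received at all, which must be handled by tracking how far $\mathcal{P}_k$'s own committed sequence advances $L$'s effective uncommitted threshold. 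Carefully discharging these cases, rather than the arithmetic of the final inequality, is where the difficulty lies.
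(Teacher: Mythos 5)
Your proposal follows essentially the same route as the paper's proof: a first-defector contradiction, the observation that every honest witness of $L$ appears in a peer's local view either as branch support for $L$'s chain or as uncommitted (because the uncommitted threshold is taken relative to the node's own last-validated sequence), and the resulting inequality showing the preferred-branch selection criterion can never favor a competing child at any level of the walk. The details you defer — extending the sibling bound from $L$'s level to every ancestor level and absorbing stale witnesses into $\mathsf{uncommitted}$ — are exactly the steps the paper discharges with its chain of inequalities on $\Delta$, so the plan is sound and matches the published argument.
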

\begin{proof}
We prove this by contradiction. If the lemma is not true, then under the lemma's hypotheses there must be some honest node which submits a validation for $L$ and also eventually submits a validation for some ledger $L'$ with $\mathsf{seq}(L')>\mathsf{seq}(L)$ and $L'$ not descended from $L$. Since correct nodes can only submit new validations for ledgers with sequence strictly greater than any ledger they have previously submitted a validation for, such a node must submit its validation for $L'$ \textit{after} having submitted its validation for $L$. Thus let $\mathcal{P}_i$ be the honest node that submitted a validation for $L$ and later is the first to submit a validation for a ledger $L'$ not descended from $L$.

The only way for $\mathcal{P}_i$ to later submit a validation for a ledger off of the $L$ branch is if it runs the preferred branch algorithm and sees a ledger off the $L$ branch as preferred. For a given $s\leqslant \mathsf{seq}(L)$, let $\mathsf{parent}(s,L)$ denote the ancestor of $L$ with sequence $s$. Since $\mathcal{P}_i$ submitted a validation for $L$ by assumption, it considers all validations for ledgers with sequence below $\mathsf{seq}(L)$ as uncommitted in the preferred branch protocol. But since $\mathcal{P}_i$ is assumed to be the first to switch away from the $L$ branch, more than $n_i/2$ nodes in $\mathsf{UNL}_i$ cannot have sent out a validation for any ledger $L'$ with $\mathsf{seq}(L')\geqslant s$ and $L'$ not descended from $L$. Thus for every $s\leqslant \mathsf{seq}(L)$, $\mathcal{P}_i$ sees a majority of nodes in $\mathsf{UNL}_i$ as being either uncommitted support at $s$ or branch support for $\mathsf{parent}(s,L)$. In other words, for all $s\leqslant \mathsf{seq}(L)$,
\begin{align*}
\mathsf{supp}_{branch}(\mathsf{parent}(s,L))>n_i/2-\mathsf{uncommitted}(s).
\end{align*}

For a given $s<\mathsf{seq}(L)$, suppose $\mathsf{parent}(s,L)$ is the current base ledger in the loop on line $10$ of Algorithm~\ref{alg:preferredBranch}. Then either $C[0]=\mathsf{parent}(s+1,L)$ or there is some $L'\neq \mathsf{parent}(s+1,L)$ with $\mathsf{parent}(L')=\mathsf{parent}(s,L)$ and $C[0]=L'$. In the latter case, the branch support for $C[1]$ must be at least equal to the branch support for $\mathsf{parent}(s+1,L)$ (breaking ties with $\phi$) by definition of the ordering of $C$. Further, the branch support for $C[0]$ must be less than $n_i/2$. Thus in line $16$,
\begin{align*}
	\Delta &= \mathsf{supp}_{branch}(C[0])-\mathsf{supp}_{branch}(C[1])+\phi(C[0],C[1]) \\
	&\leqslant \mathsf{supp}_{branch}(C[0])-\mathsf{supp}_{branch}(\mathsf{parent}(s+1,L))+1 \\
	&< n_i/2-\mathsf{supp}_{branch}(\mathsf{parent}(s+1,L))+1 \\
	&< n_i/2-(n_i/2-\mathsf{uncommitted}(s+1))+1 \\
	&\leqslant \mathsf{uncommitted}(s+1)+1,
\end{align*}
so the condition $\Delta>\mathsf{uncommitted}(s+1)$ is always false. Thus in the latter case $\mathcal{P}_i$ sees $\mathsf{parent}(s,L)$ as the preferred ledger. In the former case, $\mathcal{P}_i$ either sees $\mathsf{parent}(s+1,L)$ as the preferred ledger or continues the loop with $\mathsf{parent}(s+1,L)$ as the base ledger. By induction, $\mathcal{P}_i$ is guaranteed to see some ledger on the $L$ branch as preferred, so $\mathcal{P}_i$ cannot leave the $L$ branch, contradicting our assumption about $\mathcal{P}_i$.
\end{proof}

If more than $n_i/2$ honest nodes in $\mathsf{UNL}_i$ only ever validate descendants of $L$, then certainly $\mathcal{P}_i$ cannot fully validate a ledger that doesn't descend from $L$, since otherwise there would be $q_i>n_i/2$ nodes that sent validations for some ledger $L'$ with sequence number $s'$ that \textit{doesn't} descend from $L$. Thus we can show that consensus is safe if we can guarantee that if any honest node fully validates a ledger $L$ with sequence number $s$, then for every node $\mathcal{P}_i$, more than $n_i/2$ honest nodes in $\mathsf{UNL}_i$ must have validated $L$. The following proposition gives the overlap condition guaranteeing this property.

\begin{proposition}\label{bftFullOverlap}
Given two honest nodes $\mathcal{P}_i, \mathcal{P}_j$, $\mathcal{P}_i$ fully validating a ledger $L$ with $\mathsf{seq}(L)=s$ implies that there are more than $n_j/2$ honest nodes in $\mathsf{UNL}_j$ which validated $L$ iff $\mathsf{O}_{i,j}>n_j/2+n_i-q_i+t_{i,j}$.
\end{proposition}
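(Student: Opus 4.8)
The plan is to reduce everything to Lemma~\ref{mValidations} applied with $m=q_i$, closely paralleling the proof of Proposition~\ref{bftImmediateOverlap}. Since $\mathcal{P}_i$ fully validating $L$ means precisely that $\mathcal{P}_i$ has seen at least $q_i$ validations for $L$, instantiating the lemma at $m=q_i$ immediately yields a lower bound of $\mathsf{O}_{i,j}+q_i-n_i-t_{i,j}$ on the number of honest nodes in $\mathsf{UNL}_j$ that validated $L$, while its ``furthermore'' clause guarantees this bound is tight, i.e.\ achievable in some adversarial execution. Everything then follows from a one-line rearrangement of the overlap hypothesis.

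For the sufficiency direction, I would simply manipulate the overlap condition. Assuming $\mathsf{O}_{i,j}>n_j/2+n_i-q_i+t_{i,j}$, subtracting $n_i+t_{i,j}$ and adding $q_i$ to both sides gives
\begin{align*}
\mathsf{O}_{i,j}+q_i-n_i-t_{i,j}>n_j/2.
\end{align*}
Since the left-hand side is exactly the lower bound furnished by Lemma~\ref{mValidations}, it follows that more than $n_j/2$ honest nodes in $\mathsf{UNL}_j$ validated $L$, as required.

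For the necessity direction, I would invoke the tightness clause of Lemma~\ref{mValidations} to exhibit an explicit bad execution. Suppose $\mathsf{O}_{i,j}\leqslant n_j/2+n_i-q_i+t_{i,j}$. The lemma provides an adversarial schedule in which $\mathcal{P}_i$ sees exactly $q_i$ validations for $L$ (so $\mathcal{P}_i$ fully validates $L$) while exactly $\mathsf{O}_{i,j}+q_i-n_i-t_{i,j}$ honest nodes in $\mathsf{UNL}_j$ validate $L$. The same rearrangement turns the negated hypothesis into
\begin{align*}
\mathsf{O}_{i,j}+q_i-n_i-t_{i,j}\leqslant n_j/2,
\end{align*}
so in this execution at most $n_j/2$ honest nodes in $\mathsf{UNL}_j$ validate $L$, defeating the implication.

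The computation is routine, so I do not expect a genuine obstacle; the only points needing care are bookkeeping ones. First, the necessity construction must simultaneously let $\mathcal{P}_i$ fully validate and pin the honest-in-$\mathsf{UNL}_j$ count at its minimum — this is exactly what the second clause of Lemma~\ref{mValidations} delivers, by having the $t_{i,j}$ shared Byzantine nodes equivocate (validating $L$ to $\mathcal{P}_i$ but something else to $\mathcal{P}_j$) while all of $\mathsf{UNL}_i\setminus\mathsf{UNL}_j$ validate $L$ to make $\mathcal{P}_i$'s count exactly $q_i$. Second, I must keep the strict versus non-strict inequalities aligned so that the threshold lands precisely at $n_j/2$, matching the strict inequality stated in the iff.
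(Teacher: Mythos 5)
Your proof is correct and matches the paper's intended argument: the paper's own proof is just the one-line remark that it is analogous to Proposition~\ref{bftImmediateOverlap} with the bound $q_j$ replaced by $n_j/2$, and what you have written out — instantiating Lemma~\ref{mValidations} at $m=q_i$ for sufficiency and its tightness clause for necessity — is exactly that computation made explicit. No issues.
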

\begin{proof}
The proof is directly analogous to the proof of proposition \ref{bftImmediateOverlap}, except rather than bounding the formula by $q_j$ we bound it by $n_j/2$.
\end{proof}

\begin{theorem}\label{safetyTheorem}
XRP LCP guarantees fork safety if $\mathsf{O}_{i,j}>n_j/2+n_i-q_i+t_{i,j}$ for every pair of nodes $\mathcal{P}_i,\mathcal{P}_j$.\hfill\qed
\end{theorem}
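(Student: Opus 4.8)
The plan is to show that the single overlap inequality in the hypothesis lets us chain together Proposition~\ref{bftFullOverlap}, Lemma~\ref{preferredBranchStability}, and the quorum–majority remark that follows that lemma, so that one full validation permanently confines every honest node's future full validations to a single branch. First I would suppose that some honest node $\mathcal{P}_i$ fully validates a ledger $L$ with $\mathsf{seq}(L)=s$. Since the theorem assumes $\mathsf{O}_{i,j}>n_j/2+n_i-q_i+t_{i,j}$ for every node $\mathcal{P}_j$, Proposition~\ref{bftFullOverlap} applies with $\mathcal{P}_i$ in the role of the validating node: for every honest $\mathcal{P}_j$, more than $n_j/2$ honest nodes in $\mathsf{UNL}_j$ must have validated this one particular ledger $L$.

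Next I would observe that this is precisely the hypothesis of Lemma~\ref{preferredBranchStability}, namely a single fixed $L$ for which every UNL contains a strict honest majority of validators. Invoking the lemma, I conclude that from this point onward, for every node $\mathcal{P}_j$ more than $n_j/2$ honest nodes in $\mathsf{UNL}_j$ will only ever submit validations for descendants of $L$. I would then apply the remark immediately following the lemma to close the gap to full validation: if strictly more than $n_j/2$ honest nodes confine their validations to the $L$ branch, then fewer than $n_j/2$ nodes remain to support any ledger off that branch, and since $q_j\geqslant\lceil 0.8 n_j\rceil>n_j/2$, no ledger off the $L$ branch can ever reach the quorum $q_j$. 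Hence $\mathcal{P}_j$ can never fully validate a ledger that is not a descendant of $L$.

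Finally I would translate this into fork safety. A fork requires two honest nodes to fully validate contradictory ledgers $L,L'$ with $\mathsf{seq}(L')=\mathsf{seq}(L)$ and $L'\neq L$; such an $L'$ cannot be a descendant of $L$, so the preceding paragraph, applied to $\mathcal{P}_i$'s validation of $L$, forbids any honest $\mathcal{P}_j$ from fully validating $L'$. The same argument in fact forbids fully validating \emph{any} non-descendant of $L$ with sequence $\geqslant s$, which is the stronger property the safety subsection set out to guarantee, and it subsumes the same-sequence case that defines forking.

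Because essentially all the combinatorial content already lives in the lemmas, the theorem is mostly a synthesis, and the only real care is in aligning the hypotheses. The main subtlety to check is that Proposition~\ref{bftFullOverlap} must be instantiated with the node that fully validates playing the role of $\mathcal{P}_i$ and every other node playing $\mathcal{P}_j$; since the bound $n_j/2+n_i-q_i+t_{i,j}$ is asymmetric in $i$ and $j$, the phrase ``every pair of nodes'' must be read as every \emph{ordered} pair, so that the inequality is available in whichever direction the actual validator requires. I would also verify explicitly that Lemma~\ref{preferredBranchStability} really needs the \emph{same} $L$ across all UNLs (it does), which is exactly what the full validation of a single ledger supplies uniformly through Proposition~\ref{bftFullOverlap}.
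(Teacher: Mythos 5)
Your proposal is correct and follows exactly the route the paper intends: the theorem is stated with an immediate \qed because it is precisely the synthesis of Proposition~\ref{bftFullOverlap}, Lemma~\ref{preferredBranchStability}, and the quorum-exceeds-majority remark preceding the proposition, in the order you give. Your added observations about reading ``every pair'' as ordered pairs (since the bound is asymmetric in $i$ and $j$) and about the lemma requiring a single common $L$ are both accurate and worth making explicit.
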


Note that although proposition \ref{bftFullOverlap} is an iff statement, the overlap condition in theorem \ref{safetyTheorem} is only sufficient but not necessary for XRP LCP safety. This is because lemma \ref{preferredBranchStability} is not an iff statement. Further, there may be some validation configurations that cannot come out of deliberation, breaking our broad assumption that anything can come out of deliberation. However, it is the weakest condition that can be expressed purely as a bound on the size of overlaps.

Once again assuming $80\%$ quorums and $20\%$ faults, the overlap condition in theorem \ref{safetyTheorem} can be summarized as requiring roughly $>90\%$ UNL overlaps. Although quite a narrow margin (and certainly far more narrow than originally expected), this does still allow a small amount of variation, which is very important for the XRP Ledger network's transition to a recommended UNL comprised of independent entities. Having some flexibility in the UNLs is important both for after the diversification of trusted operators (as one can never guarantee total agreement on participants when the participants are independent entities) and also during the diversification process (if tiny disagreements during changes to the UNL list could cause a fork, then diversification would always be too risky to execute).

\subsection{Liveness}

Now that we have a concrete metric of when it is impossible for the network to fork, we would like to know when it makes forward progress. If a live network stops making forward progress, that is almost as damaging as forking, since businesses might be relying on being able to make transfers on time. Unfortunately, by the FLP result \citep{Fischer1985IDC} it is impossible to guarantee forward progress in a fully asynchronous network.

In the absence of being able to prove that the network always makes forward progress, we would like to at least be able to prove that the network cannot get ``stuck''. In other words, that the network cannot get into a state in which some honest nodes can never fully validate a new ledger.

Unfortunately, it is very difficult in general to guarantee forward progress with XRP LCP. The following example shows that it is possible to get stuck even with $99\%$ UNL overlaps and no Byzantine faults.

\begin{example}\label{stuckExample}
\end{example}Consider a network of 102 peers drawin in figure~\ref{fig:stuckExample}. There are two UNLs, the red $X=\{\mathcal{P}_1,\mathcal{P}_2,\ldots,\mathcal{P}_{101}\}$ and blue $Y=\{\mathcal{P}_2,\mathcal{P}_3,\ldots,\mathcal{P}_{102}\}$. Peers $1-51$ use $X$ and peers $52-102$ use $Y$. There are two ledgers, $L$ and $L'$. The nodes listening to $X$ all validate a descendant of $L$, while the nodes listening to $Y$ all validate a descendant of $L'$. Since $51>0.5|X|$ nodes in $X$ validate a descendant of $L$. Thus according to the preferred branch protocol all, the nodes listening to $X$ cannot switch branch to $L'$. Similarly, since $51>0.5|Y|$ nodes in $Y$ all validate a descendant of $L$, the nodes listening to $Y$ cannot switch branch to $L'$. The network cannot ever rejoin without manual intervention.

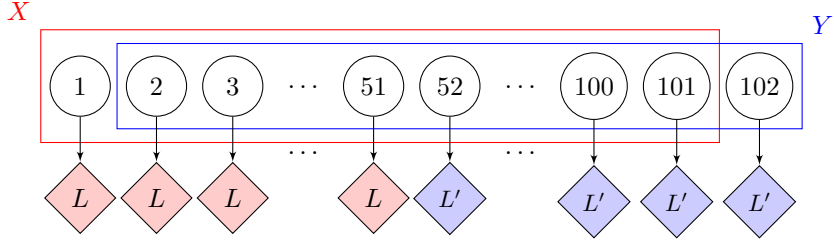
\begin{figure}
    \begin{tikzpicture}[align=center, node distance=0.6cm and 0.2cm,
        apeer/.style={circle, draw, fill=white, minimum size=0.8cm},
        valL/.style={diamond,draw,fill=red!20},
        valLp/.style={diamond,draw,fill=blue!20, scale=0.9}]
         \node [apeer] (P1) {1}; 
         \node [apeer, right =of P1] (P2) {2}; 
         \node [apeer, right =of P2] (P3) {3}; 
         \node [right =of P3] (Pdots) {$\ldots$}; 
         \node [apeer, right =of Pdots] (P51) {51}; 
         \node [apeer, right =of P51] (P52) {52}; 
         \node [right =of P52] (Prdots) {$\ldots$}; 
         \node [apeer, right =of Prdots] (P100) {100}; 
         \node [apeer, right =of P100] (P101) {101}; 
         \node [apeer, right =of P101] (P102) {102}; 
         
         \node [valL,below =of P1] (V1) {$L$};
         \node [valL,below =of P2] (V2){$L$};
         \node [valL,below =of P3] (V3){$L$};
         \node [below =of Pdots] (Vdots) {$\ldots$}; 
         \node [valL,below =of P51] (V51){$L$};
         \path[line] (P1) -- (V1);
         \path[line] (P2) -- (V2);
         \path[line] (P3) -- (V3);
         \path[line] (P51) -- (V51);
         
         \node [valLp,below =of P52] (V52) {$L'$};
         \node [below =of Prdots] (Vdots) {$\ldots$}; 
         \node [valLp,below =of P100] (V100){$L'$}; 
         \node [valLp,below =of P101] (V101){$L'$};
         \node [valLp,below =of P102] (V102){$L'$};
         \path[line] (P52) -- (V52);
         \path[line] (P100) -- (V100);
         \path[line] (P101) -- (V101);
         \path[line] (P102) -- (V102);

         \begin{pgfonlayer}{background}
            \node[draw=red, minimum height=1.5cm, fit=(P1) (P101)](UNLX) {};
            \node[text=red,above left] at (UNLX.north west) {$X$};
            \node[draw=blue, fit=(P2) (P102)](UNLY) {};
            \node[text=blue,above right] at (UNLY.north east) {$Y$};
         \end{pgfonlayer}
     \end{tikzpicture}
     \caption{Example of stuck network with 99\% UNL overlap and no Byzantine faults.}
     \label{fig:stuckExample}
\end{figure}

\vspace*{1em}

As of the time this paper was written, the recommended XRP trust model has all nodes listening to either a single UNL consisting of $5$ nodes, or a UNL consisting of those $5$ nodes plus one extra node (typically the extra node is oneself; nodes that listen to these extended UNLs are thus called "leaves", since they branch off slightly from the core network). The short-term plan for decentralization involves expanding to a larger, but still agreed-upon, single UNL and diversifying the node operators. Losing forward progress while adjusting to a new node list is not a huge problem (since as soon as everyone agrees on the node list again forward progress will resume, and the previous section guarantees for ``small'' changes it will not fork during the interim); thus we could at least get a positive result by proving that the network cannot get stuck in a complete graph with leaves.

The following lemma simplifies the problem to only needing to verify that complete networks cannot get stuck.

\begin{lemma}\label{addingLeaves}
Suppose $N$ is a closed subset of the network (i.e., the UNL of every node in $N$ is contained in $N$, so that from the perspective of the nodes inside of $N$, $N$ is the entire network) which cannot get stuck and cannot fork. Suppose $\mathcal{P}_i$ is a node not in $N$ such that $\mathsf{UNL}_i=\{\mathcal{P}_i\}\cup N'$, where $N'\subseteq N$ and $|N'|\geqslant q_i$. Then $N\cup\{\mathcal{P}_i\}$ cannot get stuck either.
\end{lemma}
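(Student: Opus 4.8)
The plan is to exploit the fact that a closed set of nodes is oblivious to anything outside it, so that the claim reduces to showing that the single new node $\mathcal{P}_i$ inherits the forward progress already enjoyed by $N$.

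First I would record the isolation of $N$. Because $N$ is closed, no node in $N$ has $\mathcal{P}_i$ in its UNL, so the presence of $\mathcal{P}_i$ cannot change a single message sent or received inside $N$. Hence the execution restricted to $N$ inside the enlarged network $N\cup\{\mathcal{P}_i\}$ is indistinguishable from an execution of $N$ in isolation, and the hypotheses apply verbatim: $N$ neither forks nor gets stuck. Combining the two properties, the honest nodes of $N$ share a single ever-growing chain of fully validated ledgers $L_1\to L_2\to\cdots$; no-fork gives a unique fully validated ledger $L_s$ at each sequence number $s$ below the common tip, and not-getting-stuck guarantees the tip advances without bound.

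Next I would show that $\mathcal{P}_i$ can never be pinned to a branch off this common chain. Since $q_i=\lceil 0.8\,n_i\rceil$ and $|N'|\geqslant q_i$, the set $N'$ is a strict majority of $\mathsf{UNL}_i$, namely $|N'|\geqslant q_i>n_i/2$. Every validation $\mathcal{P}_i$ receives from $N'$ lies on the common chain, so in $\mathcal{P}_i$'s preferred-branch computation the branch support of the common chain exceeds $n_i/2$ at each relevant sequence number, dominating whatever $\mathcal{P}_i$'s own (possibly stale) validation contributes. By the same reasoning used in Lemma~\ref{preferredBranchStability}, $\mathcal{P}_i$ therefore always selects a ledger on the common chain as preferred and deliberates on that branch, so it can never be stuck on the wrong branch. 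I would then conclude progress as follows: restricting attention to the sequence numbers $s$ whose entire deliberation falls inside a civil, fault-free period, the honest nodes of $N$ all receive the same proposals each round and run the same deterministic deliberation, so they all issue a validation for the common ledger $L_s$. In particular every one of the $|N'|\geqslant q_i$ nodes of $N'$ broadcasts $V_{L_s,\cdot}$, so $\mathcal{P}_i$ collects at least $q_i$ validations for $L_s$ and fully validates it; as $s$ ranges over the unbounded progress of $N$, this happens infinitely often and $N\cup\{\mathcal{P}_i\}$ cannot get stuck.

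The step I expect to be the main obstacle is the final quorum-formation argument, because it must draw a quorum specifically from $N'$, whereas the bare statement that ``$N$ cannot get stuck'' only guarantees that \emph{some} quorum inside $N$ fully validates each $L_s$, and that quorum need not overlap $N'$ in $q_i$ nodes. Closing this gap is exactly where eventual civility is indispensable: I must argue that in a civil, fault-free period the honest nodes of $N$ all issue a validation for the common ledger (not merely that \emph{a} quorum does), which lets me take the quorum to be the whole of $N'$. The delicate point is handling the boundary into civility and the invariant that a node which switched branches withholds validations until it catches up; I would sidestep this by only counting sequence numbers whose deliberation lies wholly within the civil regime, of which there are infinitely many since $N$'s tip advances forever. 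The hypothesis $|N'|\geqslant q_i$ is essential here, since it is precisely what makes $N'$ alone a sufficient quorum for $\mathcal{P}_i$ without relying on $\mathcal{P}_i$'s own vote.
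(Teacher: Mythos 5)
Your proof is correct and follows essentially the same route as the paper: isolate $N$, combine no-fork with no-stuck to obtain a common ledger $L$ fully validated (and hence validated) by all of $N$, and use $|N'|\geqslant q_i$ to hand $\mathcal{P}_i$ a quorum. The paper's version is terser --- it simply asserts that every node of $N$ validates $L$ (the step you patch with the civility argument), and it omits your preferred-branch argument for $\mathcal{P}_i$, which is in fact unnecessary because full validation in Algorithm~\ref{alg:fullvalidation} is purely passive counting of received validations and does not depend on which branch $\mathcal{P}_i$ is currently deliberating on.
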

\begin{proof}
Since $N$ cannot get stuck, it is always true that all the nodes in $N$ will eventually fully validate a new ledger. Since $N$ cannot fork, all the nodes in $N$ can only fully validate the same ledger, so eventually there is some ledger $L$ that gets fully validated by every node in $N$. Thus every node in $N$ will validate $L$, and since $|\mathsf{UNL}_i\cap N|=|N'|\geqslant q_i$, $\mathcal{P}_i$ can fully validate $L$ as well. Thus it is always true that $\mathcal{P}_i$ will eventually fully validate a new ledger, so $\mathcal{P}_i$ cannot get stuck.
\end{proof}

\begin{theorem}\label{unstuckTheorem}
Suppose for all nodes the UNL quorum is set to $n-\lfloor (n-1)/k\rfloor$ for some integer $k$. XRP LCP cannot get stuck in a network consisting of a single agreed-upon UNL $X$ of size at least $k$ along with an arbitrary number of leaf validators.
\end{theorem}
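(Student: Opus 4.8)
The plan is to reduce to a complete network and then exploit the fact that, under a single UNL with a super-majority quorum, the validated ledgers can never branch. First I would strip off the leaf validators with Lemma~\ref{addingLeaves}. A leaf $\mathcal{P}_i$ has $\mathsf{UNL}_i=\{\mathcal{P}_i\}\cup X$, so in the notation of that lemma $N'=X$ with $|N'|=n$ and $q_i=(n+1)-\lfloor n/k\rfloor$; the hypothesis $|N'|\geqslant q_i$ becomes $\lfloor n/k\rfloor\geqslant 1$, which holds precisely because $n=|X|\geqslant k$. Thus, as soon as the core network $N$ of all nodes sharing $X$ can neither fork nor get stuck, I may reattach the leaves one at a time and conclude. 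Everything therefore reduces to the complete network on $X$.

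For the core I would first record the main structural consequence of using one UNL. Since $q=n-\lfloor(n-1)/k\rfloor>n/2$ (for $k\geqslant 2$), and a correct node validates at most one ledger per sequence number, two distinct ledgers of the same sequence number would need two essentially disjoint sets of $q$ validators, hence more than $n$ nodes, which is impossible; this is the single-UNL specialization of the counting behind Proposition~\ref{bftImmediateOverlap}. So at most one ledger is ever validated at each sequence number, and the set of all validated ledgers forms a single chain. This hands me the ``cannot fork'' half of the hypothesis of Lemma~\ref{addingLeaves} for free and collapses liveness to a one-dimensional question: each node sits at some depth of this chain, and the network is stuck only if its tip can never be extended.

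Working under civility, I would then show the tip keeps advancing. Let $\hat s$ be the largest sequence number of any validated ledger $L_{\hat s}$. Producing $L_{\hat s}$ required a quorum of $q$ proposals from nodes then at depth $\hat s-1$, and validations are monotone in sequence, so at least $q$ nodes sit permanently at depth $\geqslant \hat s-1$; Lemma~\ref{preferredBranchStability} then locks a majority of the chain in place forever. Under civility every node receives these $q$ validations (each validation for a deeper ledger also endorses $L_{\hat s-1}$), hence fully validates up to $L_{\hat s-1}$, so the whole network collapses onto the top two depths $\hat s-1$ and $\hat s$. It then remains to show that the nodes clustered on these two depths can always assemble a fresh quorum, either completing $L_{\hat s}$ or building $L_{\hat s+1}$, so that the chain grows and full validation resumes.

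The hard part is exactly this frontier step, and it is where the precise quorum $q=n-\lfloor(n-1)/k\rfloor$ and the size bound $n\geqslant k$ are needed. The threat is a ``straggler pocket'': fewer than $q$ nodes stranded one depth below a tip that itself carries fewer than $q$ validators, so that neither depth reaches quorum while the sequence-dependent $\mathsf{uncommitted}$ count in the preferred-branch rule keeps each side from committing to the other. This is precisely the mechanism that makes Example~\ref{stuckExample} stick, and it is available there only because two different UNLs are in play. With a single UNL I would rule it out by a counting argument on how the at most $t=\lfloor(n-1)/k\rfloor$ nodes lying outside any quorum can be spread over the top two depths of the chain: I expect $n\geqslant k$ (equivalently $kt<n$) to force a clean dichotomy — either at most $t$ nodes sit at the tip, so the depth below still contains a full quorum $n-t=q$, or a strict majority sits at the tip, so the stragglers see it as decisive and catch up. Verifying this dichotomy, and in particular checking that the $\mathsf{uncommitted}$ threshold never blocks the majority side, is the crux I would expect to occupy the bulk of the argument.
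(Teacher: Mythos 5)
Your reduction to the complete network via Lemma~\ref{addingLeaves} matches the paper exactly (including the check that $|X|\geqslant k$ gives $q_i\leqslant|X|$ for a leaf), and your observation that the single-UNL core cannot fork is a sensible way to discharge the other hypothesis of that lemma. The problem is the core liveness argument, where you explicitly leave the crux open: the ``frontier step'' and the dichotomy about how stragglers are distributed over the top two depths is precisely what you would need to prove, and several of its supporting claims are already shaky. Producing $L_{\hat s}$ requires a quorum of \emph{proposals} for the parent, not validations of it, and a node exiting deliberation only broadcasts a validation when the new sequence exceeds its personal $s_{max}$, so ``at least $q$ nodes sit permanently at depth $\geqslant\hat s-1$'' does not follow; likewise the validation algorithm counts validations for an exact ledger, so validations of descendants do not by themselves advance $\hat L$ to $L_{\hat s-1}$.

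The paper's proof avoids all of this with a simpler observation that your plan misses: ``cannot get stuck'' only has to be established under eventual civility, and once messages are delivered synchronously every node in the single-UNL core receives the \emph{same} set of validations and proposals. Since the preferred-branch computation and the deliberation update rule are deterministic functions of those inputs and of the (shared) UNL, all nodes are pushed onto the same preferred ledger, then build identical proposals round by round, exit deliberation with the same transaction set, and all $n$ of them validate the same child ledger --- which trivially meets the quorum, so everyone fully validates. No counting of nodes across depths, no analysis of the $\mathsf{uncommitted}$ threshold, and no quorum assembly at the frontier is needed; unanimity obtained from determinism plus identical inputs does all the work. I would replace your frontier analysis with this argument rather than trying to complete the dichotomy.
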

\begin{proof}
For any leaf validator $\mathcal{P}_i$, $n_i=|X|+1$; since $q_i=|X|+1-\lfloor (|X|+1-1)/k\rfloor=|X|+1-\lfloor |X|/k\rfloor\leqslant |X|$, by lemma \ref{addingLeaves} it suffices to show that complete networks cannot get stuck.

Thus suppose there is a single agreed-upon UNL $X$. Suppose in some round $r$ all validation messages are delivered quickly enough so that Byzantine accountability holds and every node sees every other node's validations from round $r$. Then the preferred branch algorithm will deterministically push all nodes onto the most popular ledger $L$. Thus in the next round all nodes will validate a child of $L$. If messages sent during deliberation are delivered synchronously, then all nodes see the same proposals from round $1$ and either all exit deliberation with the same ledger (if some transaction set is shared with $80\%$ of the nodes in $X$) or else build their proposal for round $2$ deterministically from the same proposals from round $1$, so all nodes propose the same set of transactions in round $2$, and all nodes leave deliberation with the same set of transactions. Thus every node submits a validation for the same ledger, and all nodes fully validate that ledger.
\end{proof}

\section{Conclusion} \label{conclusion}
We have given a detailed description and thorough analysis of the XRP Ledger Consensus Protocol, which is a protocol for reaching consensus without universal agreement of network participants. Our work corrects prior analysis in \citep{schwartz2014ripple,Armknecht2015}. We show in theorem \ref{safetyTheorem} that roughly $>90\%$ agreement on participants is needed to ensure network safety. In the restricted case of a single expanding UNL with leaves, theorem \ref{unstuckTheorem} shows we can always make forward progress during periods when no nodes are faulty and network messages are delivered with bounded delay. In the more general case with even minor disagreement of participants, we cannot guarantee that the network makes forward progress.

It is an open question whether the sufficient overlap condition in theorem \ref{safetyTheorem} can be improved by a more detailed consideration of the trust topology of the network. A more complicated condition that does not simply take into account pairwise overlaps but also the way in which messages flow indirectly through the network might have potential for giving a more precise condition for guaranteeing safety. Likewise, we might be able to leverage the trust structure to better explain cases when deliberation can fail, which in turn might allow a more refined understanding on forward progress.

Although we have shown that XRP LCP is provably safe with the current and near-future network structures, in an attempt to alleviate some of its shortcomings, in the sibling paper \citep{Ethan} we present an alternative consensus protocol called Cobalt. Similar to XRP LCP, Cobalt can also be used in networks that lack uniform agreement on participants or trust, but makes forward progress at a steady rate in the presence of maximum tolerated Byzantine faults and arbitrary asynchrony. It only needs $>60\%$ overlap to match the XRP LCP safety tolerances. Cobalt also has several other properties that make it simpler to analyze the health of networks in practice. For these reasons we believe Cobalt represents an encouraging direction for even greater future decentralization of the XRP network.

\paragraph{Acknowledgments.} Thank you to Haoxing Du and Joseph McGee for collaborating on the early stages of this research. We also thank David Schwartz and Stefan Thomas for providing useful discussions and guidance, and Rome Reginelli for careful editing. We lastly thank all our colleagues at Ripple for their support. This work was funded by Ripple.

\bibliography{refs.bib}

\begin{thebibliography}{12}
\providecommand{\natexlab}[1]{#1}
\providecommand{\url}[1]{\texttt{#1}}
\expandafter\ifx\csname urlstyle\endcsname\relax
  \providecommand{\doi}[1]{doi: #1}\else
  \providecommand{\doi}{doi: \begingroup \urlstyle{rm}\Url}\fi

\bibitem[Armknecht et~al.(2015)Armknecht, Karame, Mandal, Youssef, and
  Zenner]{Armknecht2015}
Frederik Armknecht, Ghassan~O. Karame, Avikarsha Mandal, Franck Youssef, and
  Erik Zenner.
\newblock Ripple: Overview and outlook.
\newblock In \emph{Trust and Trustworthy Computing: 8th International
  Conference, TRUST 2015, Proceedings}, pages 163--180. Springer International
  Publishing, 2015.
\newblock ISBN 978-3-319-22846-4.
\newblock \doi{10.1007/978-3-319-22846-4_10}.

\bibitem[{Buterin} and {Griffith}(2017)]{2017arXiv171009437B}
V.~{Buterin} and V.~{Griffith}.
\newblock Casper the friendly finality gadget.
\newblock \emph{ArXiv e-prints}, October 2017.
\newblock URL \url{https://arxiv.org/abs/1710.09437}.

\bibitem[Castro and Liskov(1999)]{Castro1999}
Miguel Castro and Barbara Liskov.
\newblock Practical {Byzantine} fault tolerance.
\newblock In \emph{Proceedings of the Third Symposium on Operating Systems
  Design and Implementation}, OSDI '99, pages 173--186, Berkeley, CA, USA,
  1999. USENIX Association.
\newblock ISBN 1-880446-39-1.

\bibitem[Chandra and Toueg(1996)]{Chandra1996UFD}
Tushar~Deepak Chandra and Sam Toueg.
\newblock Unreliable failure detectors for reliable distributed systems.
\newblock \emph{J. ACM}, 43\penalty0 (2):\penalty0 225--267, March 1996.
\newblock ISSN 0004-5411.
\newblock \doi{10.1145/226643.226647}.
\newblock URL \url{http://doi.acm.org/10.1145/226643.226647}.

\bibitem[Fischer et~al.(1985)Fischer, Lynch, and Paterson]{Fischer1985IDC}
Michael~J. Fischer, Nancy~A. Lynch, and Michael~S. Paterson.
\newblock Impossibility of distributed consensus with one faulty process.
\newblock \emph{J. ACM}, 32\penalty0 (2):\penalty0 374--382, April 1985.
\newblock ISSN 0004-5411.
\newblock \doi{10.1145/3149.214121}.

\bibitem[{MacBrough}(2018)]{Ethan}
Ethan {MacBrough}.
\newblock Cobalt: {BFT} governance in open networks.
\newblock \emph{ArXiv e-prints}, February 2018.

\bibitem[Nakamoto(2009)]{nakamoto2012bitcoin}
Satoshi Nakamoto.
\newblock Bitcoin: A peer-to-peer electronic cash system.
\newblock 2009.
\newblock URL \url{http://www.bitcoin.org/bitcoin.pdf}.

\bibitem[Pease et~al.(1980)Pease, Shostak, and Lamport]{Pease1980RAP}
M.~Pease, R.~Shostak, and L.~Lamport.
\newblock Reaching agreement in the presence of faults.
\newblock \emph{J. ACM}, 27\penalty0 (2):\penalty0 228--234, April 1980.
\newblock ISSN 0004-5411.
\newblock \doi{10.1145/322186.322188}.
\newblock URL \url{http://doi.acm.org/10.1145/322186.322188}.

\bibitem[Schneider(1993)]{Schneider1993RMU}
Fred~B. Schneider.
\newblock Distributed systems (2nd ed.).
\newblock chapter Replication Management Using the State-machine Approach,
  pages 169--197. ACM Press/Addison-Wesley Publishing Co., New York, NY, USA,
  1993.
\newblock ISBN 0-201-62427-3.

\bibitem[Schwartz et~al.(2014)Schwartz, Youngs, and Britto]{schwartz2014ripple}
David Schwartz, Noah Youngs, and Arthur Britto.
\newblock The {Ripple} protocol consensus algorithm.
\newblock \emph{Ripple Labs Inc White Paper}, 2014.
\newblock URL \url{https://ripple.com/files/ripple_consensus_whitepaper.pdf}.

\bibitem[Sompolinsky and Zohar(2015)]{Sompolinksy2015}
Yonatan Sompolinsky and Aviv Zohar.
\newblock Secure high-rate transaction processing in bitcoin.
\newblock In \emph{Financial Cryptography and Data Security}, pages 507--527,
  Berlin, Heidelberg, 2015. Springer Berlin Heidelberg.
\newblock ISBN 978-3-662-47854-7.

\bibitem[Thomas(2017)]{decentralizeStrategy}
Stefan Thomas.
\newblock How we are further decentralizing the {XRP} ledger to bolster
  robustness for enterprise use, 2017.
\newblock URL
  \url{https://ripple.com/insights/how-we-are-further-decentralizing-the-ripple-consensus-ledger-rcl-to-bolster-robustness-for-enterprise-use/}.

\end{thebibliography}

\appendix
\section{Algorithms} \label{appendix}
In this appendix, we provide pseudo-code for the three components of Ripple consensus described in section~\ref{protocol}. Note that under the network model in section~\ref{definitions}, a message which is \textbf{broadcast} by a node $\mathcal{P}_i$ is \textbf{receive}d by all nodes, including $\mathcal{P}_i$ itself.
\begin{algorithm}
    \begin{algorithmic}[1]
        \State $s_{max} \gets 0$ \Comment{Track the largest validated ledger sequence number }
        \item[]
        \Function{Start}{$L$}
            \State $\tilde{L} \gets L, r \gets 0$
            \State $T \gets $ pending transactions
            \State $props \gets \{\}$ \Comment{$props$ is a map from node to proposal}       
            \State Initialize $props$ with previously received proposals for $\tilde{L}$ 
            \State \textbf{broadcast} $P_{T,r,\tilde{L}, i}$
        \EndFunction
        \item[]
        \Receive{$P_{T', r', L, j}$}
            \If{$\mathcal{P}_j \in \mathsf{UNL}_i$ and $\tilde{L}=L$ and $r' > props[j].r$}
                \State $props[j] = P_{T', r', L, j}$
            \EndIf
        \EndReceive
        \item[]
        \Function{Update}{$ $} \Comment{Called at a regular, protocol defined interval}
            \If{$\tilde{L} \neq$\Call{PreferredLedger}{$ $}}
                \State \Call{start}{\textproc{preferredLedger}()}
            \Else
                \State \Call{UpdatePosition}{$ $}
                \If{\Call{CheckConsensus}{$ $}}
                    \State $\tilde{L} \gets \Call{apply}{T,\tilde{L}}$
                    \If{$\mathbf{seq}(\tilde{L}) > s_{max}$}
                        \State \textbf{broadcast} $V_{\tilde{L}, i}$
                        \State $s_{max} \gets \mathbf{seq}(\tilde{L})$
                    \EndIf
                    \State \Call{Start}{$\tilde{L}$}
                \EndIf
            \EndIf
        \EndFunction
        \algstore{deliberation}
    \end{algorithmic}
    \caption{Deliberation from the perspective of $\mathcal{P}_i$}
    \label{alg:deliberation}
\end{algorithm}
\clearpage
\begin{algorithm}
    \ContinuedFloat
    \begin{algorithmic}
        \algrestore{deliberation}
        \Function{UpdatePosition}{$ $}
              \State $T_{all} \gets \bigcup_{P \in props} P.T$ \Comment{Set of all proposed transactions}
              \State $\tau \gets threshold(r)n_i$  
              \State $T \gets \{ x \in T_{all} : \Call{support}{x} > \tau\}$ \Comment{\parbox[t]{.41\linewidth}{\textproc{support} is number of nodes proposing $x$} }
              \State $r \gets r + 1$
              \State \textbf{broadcast} $P_{T,r,\tilde{L}, i}$
          \EndFunction
        \item[] 
        \Function{CheckConsensus}{$ $}
            \State $n_a \gets |\{P \in props : P.T = T\}|$ \Comment{\parbox[t]{0.5\linewidth}{Node positions agreeing with our position}}
            \State \Return{$n_a \geq q_i$} 
        \EndFunction
    \end{algorithmic}
    \caption{Deliberation from the perspective of $\mathcal{P}_i$ (continued) }
\end{algorithm}

\begin{algorithm}
    \begin{algorithmic}[1]
        \State $vals = \{\}$ \Comment{$vals$ is a map from $L$ to the set of nodes that validated $L$}
        \item[]
        \Receive{$V_{L, j}$}
            \If{$\mathcal{P}_j \in \mathsf{UNL}_i$}
                \State $vals[L] \gets vals[L] \cup j$
                \If{$|vals[L]| \geq q_i$ and $\mathsf{seq}(L) > \mathsf{seq}(\hat{L})$ }
                    \State $\hat{L} \gets L$
                \EndIf
            \EndIf
        \EndReceive
    \end{algorithmic}
    \caption{Validation from the perspective of $\mathcal{P}_i$}
    \label{alg:fullvalidation}
\end{algorithm}

\begin{algorithm}
    \begin{algorithmic}[1]
        \State $lastVals = \{\}$ \Comment{\parbox[t]{0.7\linewidth}{$lastVals$ is a map from trusted node to its most recent validated ledger }}
        \item[]
        \Receive{$V_{L, j}$}
            \If{$\mathcal{P}_j \in \mathsf{UNL}_i$} 
                \State $lastVals[j] \gets L$
            \EndIf
        \EndReceive
        \item[]

        \Function{PreferredLedger}{$ $ }  
            \State $L \gets $ earliest common ancestor of ledgers in $lastVals$
            \State $done = $ False
            \While{$|children(L)| > 0$ and not $done$}
                \State $C \gets  $ \parbox[t]{0.8\linewidth}{Sorted array of $children(L)$ by descreasing $\mathsf{supp}_{branch}$, breaking ties with $\phi$}      
                \State $\Delta \gets \mathsf{supp}_{branch}{C[0]}$
                \If{$|children(L)| > 1$}
                    \State $\Delta \gets \Delta - \mathsf{supp}_{branch}{C[1]} + \phi(C[0],C[1])$
                \EndIf
                \If{$\Delta > \mathsf{uncommitted}(\mathsf{seq}(L)+1) $}
                    \State $L \gets C[0]$
                \Else
                    \State $done \gets $ True
                \EndIf
            \EndWhile
            \If{$L \in ancestors(\tilde{L})$}
                \State \Return{$\tilde{L}$}
            \Else
                \State \Return{$L$}
            \EndIf
        \EndFunction
    \end{algorithmic}
    \caption{Preferred branch from the perspective of $\mathcal{P}_i$}
    \label{alg:preferredBranch}
\end{algorithm}

\end{document}